\documentclass[tpa,11pt,preprint,letterpaper]{myimsart} \pubyear{January 15, 2014}

\RequirePackage[OT1]{fontenc}
\RequirePackage{amsthm,amsmath}
\RequirePackage[numbers]{natbib}
\RequirePackage[colorlinks,citecolor=blue,urlcolor=blue]{hyperref}
\usepackage{graphicx,float,latexsym,times}
\usepackage{amsfonts,amstext,amssymb}
\usepackage{mathrsfs}
\usepackage{bbm}

\usepackage{epstopdf}

\volume{}
\issue{}
\firstpage{}
\lastpage{}
\arxiv{}

\startlocaldefs
\numberwithin{equation}{section}
\theoremstyle{plain}
\newtheorem{theorem}{Theorem}[section]

\newcommand{\ind}[1]{\mathbbm{1}_{\{#1\}}}   

\newcommand{\ccF}{\mathscr{F}}
\newcommand{\ccJ}{\mathscr{J}}

\newcommand{\ccG}{\mathscr{G}}
\newcommand{\ccA}{\mathscr{A}}

\newcommand{\cA}{\mathcal{A}}
\newcommand{\cJ}{\mathcal{J}}

\newcommand{\Real}{\mathbb{R}}

\newcommand{\ignore}[1]{{}}

\newcommand{\esinf}{\text{ess}\,\!\inf}
\newcommand{\esup}{\text{ess}\,\!\sup}
\newcommand{\Exp}{{\sf E}}
\newcommand{\Pro}{{\sf P}}

\newcommand{\TS}{\mathcal{S}}
\newcommand{\TC}{T_{\text{C}}}

\endlocaldefs

\begin{document}

\begin{frontmatter}
\title{Multiple optimality properties of the\\ Shewhart test}
\runtitle{Multiple optimality properties of the Shewhart test}

\begin{aug}
\author{\fnms{George V.} \snm{Moustakides}\ead[label=e1]{moustaki@upatras.gr}}
\affiliation{University of Patras}
\address{Department of Electrical and\\ Computer
Engineering\\
University of Patras\\ 26500 Rio\\ Greece\\ \printead{e1}}
\runauthor{GEORGE V. MOUSTAKIDES}
\end{aug}

\begin{abstract}
For the problem of sequential detection of changes, we adopt the probability maximizing approach in place of the classical minimization of the average detection delay, and propose modified versions of the Shiryaev, Lorden and Pollak performance measures. For these alternative formulations, we demonstrate that the optimum sequential  detection scheme is the simple Shewhart rule. Interestingly, we can also solve problems which under the classical setup have been open for many years, as optimum change detection with time varying observations or with multiple post-change probability measures. For the last case, we also offer the exact solution for Lorden's original setup when the average false alarm period is within certain limits.
\end{abstract}

\begin{keyword}[class=AMS]
\kwd[Primary ]{62L10}
\kwd[; secondary ]{62L15, 60G40}
\end{keyword}

\begin{keyword}
\kwd{Shewhart test}
\kwd{Change detection}
\kwd{Sequential detection}
\end{keyword}

\end{frontmatter}

\section{Introduction}
Suppose $\{\xi_t\}_{t>0}$ is a discrete-time process which becomes available sequentially and define $\{\ccF_t\}_{t\geq0}$ to be the associated filtration with $\ccF_t=\sigma\{\xi_1,\ldots,\xi_t\}$ the $\sigma$-algebra generated by the observations up to time $t$. Let $\tau\in\{\ldots,-1,0,1,\ldots\}$ denote a \textit{changetime} and assume that the observations follow the probability measure $\Pro_\infty$ up to and including $\tau$, while after $\tau$ the probability measure switches to $\Pro_0$. If the change in statistics takes place at $\tau=t$ then this induces a probability measure which we denote with $\Pro_t$ while $\Exp_t[\cdot]$ is reserved for the corresponding expectation. We would like to stress that, here, $\tau$ denotes the \textit{last} time instant under the nominal regime and \textit{not} the first under the alternative which is the usual practice. This slight difference allows to view $\tau$ as a \textit{stopping time} (the time the observations \textit{stop} following the nominal statistics), property which can be analytically very convenient (see Moustakides \cite{Moustakides2}).

We are interested in detecting the occurrence of the changetime $\tau$ with the help of a stopping time $T$ adapted to the filtration $\{\ccF_t\}$ that will signal the change as soon as possible avoiding, at the same time, making frequent false alarms. The effectiveness of a detection scheme is commonly quantified through the average detection delay. There are, of course, various possibilities depending on the prior knowledge we have and the model we adopt for the changetime. In particular, assuming $\tau$ to be random, independent from the observations, with a known prior, Shiryaev \cite{Shiryaev} proposed the following measure
\begin{equation}
\cJ_{\rm S}(T)=\Exp[T-\tau|T>\tau].
\label{eq:shiryaev}
\end{equation}
If we consider $\tau=t$ to be deterministic and unknown we can then follow a worst-case analysis and consider the performance measure proposed by Lorden \cite{Lorden}
\begin{equation}
\cJ_{\rm L}(T)=\sup_{t\ge0}\esup\Exp_t[T-t|\ccF_t,T>t].
\label{eq:lorden}
\end{equation}
Finally, assuming again that $\tau=t$ is deterministic and unknown we can alternatively define 
\begin{equation}
\cJ_{\rm P}(T)=\sup_{t\geq0}\Exp_t[T-t|T>t],
\label{eq:pollak}
\end{equation}
which is the criterion introduced by Pollak \cite{Pollak}. The three measures depicted in \eqref{eq:shiryaev},\eqref{eq:lorden},\eqref{eq:pollak} are the most common criteria encountered in the literature and, as noted in \cite{Moustakides2}, they can be recovered from a general definition that treats $\tau$ as a stopping time. An optimum stopping rule $T$ is then specified by minimizing these performance measures subject to suitable false alarm constraints.

\subsection{Criteria based on detection probability} \label{ssec:1.0}
We observe from \eqref{eq:shiryaev},\eqref{eq:lorden}, \eqref{eq:pollak} that no hard limit is imposed on the detection delay. Consequently, this quantity can become arbitrarily large. As reported in Gu\'epi\'e et al. \cite{Nikiforov} and in references therein there are several applications in practice where unbounded delays can be undesirable and one would rather detect the change within a pre-specified time window, \textit{after} the change has occurred\footnote{According to our definition, stopping before and at $\tau$ corresponds to false alarm.}. In other words we like to have $\tau<T\leq\tau+m$, for given $m\geq1$. Stopping within the prescribed interval constitutes a desirable event while if $T>\tau+m$ this is \textit{not} considered as successful detection.

Similarly to \eqref{eq:shiryaev},\eqref{eq:lorden},\eqref{eq:pollak}, we can now propose the following alternatives of the three classical performance measures
\begin{align}
\ccJ_{\rm S}(T)&=\Pro(\tau<T\leq\tau+m|T>\tau) \label{eq:shiryae}\\
\ccJ_{\rm L}(T)&=\inf_{t\ge0}\esinf\Pro_t(t<T\leq t+m|\ccF_t,T>t)\label{eq:lorde}\\
\ccJ_{\rm P}(T)&=\inf_{t\ge0}\Pro_t(t<T\leq t+m|T>t)\label{eq:polla}.
\end{align}
As we can see, instead of focusing on the average detection delay, we now pay attention to the detection probability. Consequently, here, we need to replace the minimization of the worst-case average detection delay of the classical approach with the \textit{maximization of the worst-case detection probability}. 

Bojdecki \cite{Bojdecki} was the first to adopt this probability maximizing idea by considering the maximization of the probability $\Pro(|\tau+1-T|\leq M)$. The complete solution to this problem was offered for the case $M=0$ and for the Bayesian formulation with the changetime $\tau$ following a geometric prior. The optimum stopping time turned out to be the simple test introduced by Shewhart in \cite{Shewhart} and which will also become our main focus in the analysis that follows.
We should mention that $M=0$ corresponds to the maximization of the probability of the event $\{T=\tau+1\}$, namely that detection is achieved by using just \textit{the first observation under the alternative regime}. A point we need to make is that Bojdecki, in his approach, did not attempt to control false alarms in any sense. Following similar ideas, Sarnowski and Szajowski \cite{Szajowski} extended this result to the dependent observations case; while very recently Pollak and Krieker \cite{Pollak2} considered the i.i.d.~case but with the data after the change distributed according to a parametric family of pdfs and the parameters following a known prior. Pollak and Krieker \cite{Pollak2} also adopted a semi-Bayesian approach where the changetime $\tau$ is deterministic and unknown while the post-change density, as before, is a parametric family with the parameters distributed according to a known prior. In the current work, unlike \cite{Bojdecki} and \cite{Szajowski}, we follow the common practice of the classical formulation and, as in the semi-Bayesian approach of \cite{Pollak2}, we impose suitable constraints for false alarm control. 

Before continuing with the detailed presentation of the various formulations, we first recall the form of the Shewhart test \cite{Shewhart} that we are going to adopt for our analysis. Consider observations $\{\xi_t\}$ that are independent but not necessarily identically distributed before and after the change and denote with $\{\ell_t\}$ the corresponding sequence of likelihood ratios. We are then interested in the following form of the Shewhart test\footnote{\label{foot:3} To avoid unnecessary technical complications, throughout our work, we are going to assume that the cdfs of all likelihood ratios $\ell_t$, under both probability measures, are continuous and strictly increasing functions.}
\begin{equation}
\TS=\inf\{t>0:~\ell_t\ge\nu_t\}.
\label{eq:Shewhart}
\end{equation}
The threshold sequence $\{\nu_t\}$ is deterministic and its exact form depends on the criterion we adopt and the statistics of the observations.

Having defined the Shewhart stopping time of interest, we briefly recall an optimality result for this test which has already been established in Moustakides \cite{Moustakides1}. In particular, in the next subsection we discuss the fact that the Shewhart test \textit{matches} CUSUM as long as the average false alarm period does not exceed a specific value.

\subsection{Optimality with respect to Lorden's classical criterion}\label{ssec:1.1}
In the case of i.i.d.~observations before and after the change with corresponding pdfs $f_\infty(\xi)$ and $f_0(\xi)$, in \cite{Moustakides1} it was proved that CUSUM solves the following constrained optimization problem proposed by Lorden \cite{Lorden}
\begin{multline}
\inf_T\sup_{t\ge0}\,\esup\,\Exp_t[T-t|\ccF_t,T>t];~~\text{over all}~T:~\Exp_\infty[T]\geq\gamma\geq1.
\label{eq:Lorden_opt}
\end{multline}
The CUSUM stopping time $\TC$ is defined in \cite{Moustakides1} as follows: For $t>0$ let
\begin{equation}
Y_t=\max\{Y_{t-1},1\}{\mit\ell_t},~Y_0=0;~~\TC=\inf\{t>0:~Y_t\ge\nu\},
\label{eq:CUSUM}
\end{equation}
where $\{Y_t\}$ is the CUSUM statistic, while the constant threshold $\nu\ge0$ is selected so that the false alarm constraint is satisfied with equality.

It is interesting to note that, customary, the CUSUM statistic $Y_t$ is specified in the literature slightly differently, namely, $Y_t=\max\{Y_{t-1}\ell_t,1\}$. When $\nu>1$, the two statistics give rise to exactly the same stopping time $\TC$. However, when $0\leq\nu\leq1$, by adopting the classical definition we are forced to stop at $T=1$, while \eqref{eq:CUSUM} results in a nontrivial stopping time. It is in fact for these values of the threshold, that is, $\nu\in[0,1]$ that CUSUM is reduced to the Shewhart test. Indeed note from \eqref{eq:CUSUM} that, as long as we do not stop at $t-1$, we have $Y_{t-1}<\nu$. Consequently when $\nu\leq1$ this immediately implies that $Y_t=\max\{Y_{t-1},1\}\ell_t=\ell_t$ suggesting that CUSUM is reduced to the Shewhart rule \eqref{eq:Shewhart} with constant threshold $\nu$.

Let us identify the range of false alarm rates for which CUSUM is equivalent to the Shewhart test.
Since under each probability measure the sequence $\{\ell_t\}$ is i.i.d.~we can conclude
\begin{multline}
\Exp_i[\TS]=\sum_{t=0}^\infty\Pro_i(\TS>t)=\sum_{t=0}^\infty\Pro_i(\ell_{1}<\nu;\ell_{2}<\nu;\cdots;\ell_{t}<\nu)\\
=\sum_{t=0}^\infty\big(\Pro_i(\ell_1<\nu)\big)^t=\frac{1}{\Pro_i(\ell_1\ge\nu)},~~i=0,\infty.
\label{eq:Average}
\end{multline}
From the previous equality we deduce that the largest value of the false alarm rate $\gamma$, for which $\TC=\TS$, is achieved when $\nu=1$. This implies that for $\gamma\in[1,\Pro_\infty^{-1}(\ell_1\ge1)]$ we can find a threshold $\nu\in[0,1]$ such that CUSUM is reduced to the Shewhart rule. It is also clear that the classical definition of CUSUM cannot accommodate any false alarm rate within the same interval. 

The previous range of false alarm rates can become more pronounced if we consider the exponential penalty criterion proposed by Poor \cite{Poor}, that is,
$$
\hat{\cJ}_{\rm L}(T)=\sup_{t\ge0}\,\esup\,\Exp_t\left[\frac{1-c^{T-t}}{1-c}\,\Big|\,T>t,\ccF_t\right],~0<c,~c\neq1.
$$
It is easy to see that from the previous criterion we can recover \eqref{eq:lorden} by letting $c\to1$. As in \eqref{eq:Lorden_opt} we are interested in minimizing $\hat{\cJ}_{\rm L}(T)$ over all stopping times that satisfy the same false alarm constraint $\Exp_\infty[T]\geq\gamma\geq1$. The optimum stopping time (see \cite{Poor}) has the following CUSUM-like form
$$
\hat{Y}_t=\max\{\hat{Y}_{t-1},1\}{c\ell_t},~\hat{Y}_0=0;~~\hat{T}_{\rm C}=\inf\{t>0:~\hat{Y}_t\geq\nu\}.
$$
We can then verify that $\hat{T}_{\rm C}$ is reduced to the Shewhart test when $\gamma\in[1,\Pro_\infty^{-1}(\ell_1\ge1/c)]$. If $0<c<1$, the previous interval is clearly larger than the one obtained in the classical $c=1$ case. The range of false alarm rates just specified can be quite significant if the two pdfs differ drastically, namely when we have ``large changes''. Let us demonstrate this fact with a simple example.

\textsc{Example 1}: Consider the detection of a change in the mean of a Gaussian i.i.d.~process of unit variance, from 0 to $\mu>0$. We can then see that when
$$
1\leq\gamma\leq\frac{1}{\Phi(-0.5\mu)},
$$
where $\Phi(x)$ is the cdf of a standard Gaussian, CUSUM is reduced to Shewhart with corresponding maximal average detection delay satisfying
$$
1\leq\cJ_{\rm L}(\TS)\leq\frac{1}{\Phi(0.5\mu)}.
$$
If we select $\mu$ such that $\Phi(-0.5\mu)=0.001$, resulting in $\mu=6.1805$, this allows for average false alarm periods in the interval $1\leq\gamma\leq1000$, when the corresponding detection delay is, at worst, equal to 1.001; performance which, undoubtedly, can satisfy any exigent user.

Our previous discussion corroborates what is already known in the literature, namely, that the Shewhart test behaves extremely well when changes are ``large'' while in the case of ``small'' changes one needs to resort to CUSUM. Actually, it is clear that this optimal behavior of the Shewhart test is inherited from the optimality of CUSUM.

Even though the previous result concerning the Shewhart test is interesting, it is nevertheless theoretically restricted since it covers only a limited range of false alarm rates. In the next section we will demonstrate that this simple detection rule is in fact optimum according to a number of intriguing criteria. We would also like to mention that in Section\,\ref{ssec:4.1} we will return to this optimality property of Shewhart and extend it to the case of multiple possibilities under the post-change regime. 

\section{The probability maximizing approach}
Let us now adopt the alternative performance measures introduced in Section\,\ref{ssec:1.0} and analyze the special case $m=1$. As mentioned, this corresponds to the probability of the event that detection will be achieved with the first observation under the alternative regime. Therefore we consider
\begin{align}
\ccJ_{\rm S}(T)&=\Pro\big(T=(\tau+1)^+|T>\tau\big) \label{eq:shiryaev2}\\
\ccJ_{\rm L}(T)&=\inf_{t\ge0}\esinf\Pro_t(T=t+1|\ccF_t,T>t)\label{eq:lorden2}\\
\ccJ_{\rm P}(T)&=\inf_{t\ge0}\Pro_t(T=t+1|T>t)\label{eq:pollak2},
\end{align}
corresponding to \eqref{eq:shiryae},\eqref{eq:lorde},\eqref{eq:polla} respectively, with $x^+=\max\{x,0\}$.
In the previous measures we define the value of the conditional probability to be 1 when $\{T>\tau\}$ or $\{T>t\}$ (hence also $\{T=\tau+1\}$ or $\{T=t+1\}$ respectively) is the empty set. 
Additionally, we note that in the case of Shiry\-aev's modified measure \eqref{eq:shiryaev2}, due to the existence of the prior probability, it is possible to distinguish between the events $\tau\leq-1$ and $\tau=0$. In the former case the soonest we can hope to detect the change is at time 0. This is the reason why in our criterion we use $(\tau+1)^+$ instead of $(\tau+1)$. In the other two measures this modification is unnecessary since, due to lack of prior information, a change before 0 cannot be distinguished from a change at 0.

Regarding now the stopping time $T$, we need to properly enrich the $\sigma$-algebra $\ccF_0$ so as randomization is permitted at time 0. In particular, at time 0, with probability $\varpi$ we decide to stop at 0 and take no samples and with probability $(1-\varpi)$ to employ a standard stopping time that requires sampling. Probability $\varpi$ is selected independently from the observations. This slight modification in the definition of our stopping time is absolutely necessary for Shiryaev's formulation while for the Lorden and Pollak setup it is needed only for technical reasons.

We continue our presentation by examining various optimality problems defined with the help of the previous performance measures in combination with proper false alarm constraints. We start with Shiryaev's Bayesian setup.

\subsection{Modified Shiryaev criterion}
Shiryaev \cite{Shiryaev} considered the changetime $\tau$ to be random, independent from the observations, with a zero modified exponential prior of the form\footnote{There is a slight difference between the current definition of the prior and the one encountered in the literature. This is because in our approach, $\tau$ is the last time instant under the nominal regime, whereas in the literature $\tau$ is conventionally considered as the first instant under the alternative.}: $\Pro(\tau\leq-1)=\pi$ and $\Pro(\tau=t)=(1-\pi)p(1-p)^t,~t\geq0$; where $\pi\in[0,1]$ and $p\in(0,1]$. Combining \eqref{eq:shiryaev2} with the classical constraint on the false alarm probability used in Baysian approaches, we propose the following constrained optimization problem
\begin{multline}
\sup_T\ccJ_{\rm S}(T)=\sup_T\Pro\big(T=(\tau+1)^+|T>\tau\big);\\
\text{over all}~T:~\Pro(T\leq\tau)\leq\alpha,
\label{eq:Shiryaev2}
\end{multline}
where $\alpha\in(0,1)$ is a prescribed false alarm level. The next theorem identifies the optimum scheme that solves \eqref{eq:Shiryaev2}.

\begin{theorem}\label{th:1}
The optimum detection rule that solves \eqref{eq:Shiryaev2} is defined as follows:

i)~$\alpha\geq(1-\pi)$: With probability $\varpi=1$ stop at 0 without taking any samples.

ii)~$(1-\pi)>\alpha\geq(1-\pi)\frac{(1-p)\Pro_\infty(\ell_1\geq\nu^*)}{1-(1-p)\Pro_\infty(\ell_1<\nu^*)}$ where $\nu^*=\frac{\pi}{1-\pi}\frac{1-p}{p}$: With probability
$$
\varpi=\frac{\alpha}{p(1-\pi)}[1-(1-p)\Pro_\infty(\ell_1<\nu^*)]-\frac{1-p}{p}\Pro_\infty(\ell_1\geq\nu^*),
$$
decide between stopping at 0 and using the Shewhart stopping time with constant threshold $\nu^*$.

iii)~$(1-\pi)\frac{(1-p)\Pro_\infty(\ell_1\geq\nu^*)}{1-(1-p)\Pro_\infty(\ell_1<\nu^*)}>\alpha$: The optimum is the Shewhart stopping time with constant threshold $\nu$ computed from
\begin{equation}
\Pro_\infty(\ell_1\geq\nu)=\frac{\alpha p}{(1-\pi-\alpha)(1-p)}
\label{eq:th1.iii}
\end{equation}
and with randomization probability $\varpi=0$.
\end{theorem}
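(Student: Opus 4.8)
The plan is to separate the trivial high-false-alarm regime from the interesting one, to reduce the fractional criterion to a linear detection-probability maximization, and then to identify the optimal rule through a change of measure that turns the relaxed problem into a discounted optimal stopping problem with i.i.d.\ increments.

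First I would dispose of case~(i). Since the success event $\{T=(\tau+1)^+\}$ is contained in $\{T>\tau\}$, the objective is the ratio $\ccJ_{\rm S}(T)=\Pro(T=(\tau+1)^+)/\Pro(T>\tau)$, which never exceeds $1$. Stopping at $0$ with $\varpi=1$ yields $T=0$, false-alarm probability $\Pro(0\le\tau)=\Pro(\tau\ge0)=1-\pi$, and detection exactly on $\{\tau\le-1\}$; it is therefore feasible once $\alpha\ge1-\pi$ and attains $\ccJ_{\rm S}=1$, hence is optimal. For the remaining range $\alpha<1-\pi$ I would write $B(T)=\Pro(T>\tau)\ge1-\alpha$ and use the bound $\ccJ_{\rm S}(T)\le \Pro(T=(\tau+1)^+)/(1-\alpha)$, valid for every feasible $T$. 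It then suffices to maximize the \emph{unconditional} detection probability $N(T)=\Pro(T=(\tau+1)^+)$ subject to $\Pro(T\le\tau)\le\alpha$ and to check that the maximizer saturates the constraint, so that the denominator equals $1-\alpha$ and the bound is attained.

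Second, I would make the change of measure explicit. Writing $T$ as ``stop at $0$ with probability $\varpi$, otherwise use a sampling rule $T'\ge1$'', summing the geometric prior, and using $\mathrm{d}\Pro_t/\mathrm{d}\Pro_\infty|_{\ccF_n}=\prod_{s=t+1}^n\ell_s$, both $N$ and the false-alarm probability become geometric sums that reorganize, after interchanging the order of summation, into the Lagrangian
\[
N(T)-\lambda\Pro(T\le\tau)=\varpi\big(\pi-\lambda(1-\pi)\big)+(1-\varpi)(1-\pi)p\!\sum_{k\ge1}(1-p)^{k-1}\Exp_\infty\!\big[\ind{T'=k}(\ell_k-c)\big],
\]
with $c=\lambda(1-p)/p$. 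The bracketed sum is the value of a discounted optimal stopping problem in which the rewards $\ell_k-c$ (i.i.d.\ under $\Pro_\infty$) are discounted by $(1-p)^{k-1}$; its solution is a one-sided region $\{\ell_k\ge\nu\}$ with a \emph{constant} threshold $\nu=c+(1-p)V$, where $V$ solves the stationary equation $V=\Exp_\infty[\max(\ell_1-c,(1-p)V)]$. This is precisely the Shewhart rule \eqref{eq:Shewhart} with constant threshold. Comparing the stop-at-$0$ coefficient $\pi-\lambda(1-\pi)$ with the optimal continuation value then forces $\varpi\in\{0,1\}$ except at the critical multiplier where the two coincide, at which point any $\varpi$ is Lagrange-optimal and randomization is used to meet the constraint.

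Third, having reduced the optimum to the two-parameter family (threshold $\nu$, randomization $\varpi$) with active constraint, I would finish by explicit computation. Setting $\rho=\Pro_\infty(\ell_1<\nu)$ and $\kappa=p/(1-(1-p)\rho)$, the constraint $\Pro(T\le\tau)=\alpha$ fixes $(1-\varpi)\kappa=(1-\pi-\alpha)/(1-\pi)$ and forces $B=1-\alpha$, so maximizing $\ccJ_{\rm S}$ reduces to maximizing $N$ over $\nu$. Using the likelihood-ratio identity $g_0(\nu)=\nu\,g_\infty(\nu)$ for the densities of $\ell_1$, the derivative of $N$ in $\nu$ is proportional to $\tfrac{\pi(1-p)}{p}-(1-\pi)\nu$, which changes sign exactly at $\nu^*=\tfrac{\pi}{1-\pi}\tfrac{1-p}{p}$. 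Hence the unconstrained optimal threshold is $\nu^*$; it is admissible precisely when $\kappa(\nu^*)\ge(1-\pi-\alpha)/(1-\pi)$, giving case~(ii) with the stated $\varpi$, and otherwise the constraint forces $\varpi=0$ and a larger threshold solving \eqref{eq:th1.iii}, which is case~(iii). The boundary between the two coincides with the stated lower limit on $\alpha$.

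The step I expect to be the main obstacle is the rigorous content of the second paragraph: one must justify the interplay between the fractional-to-linear reduction and the active constraint, verify that the discounted optimal stopping problem genuinely has a finite constant threshold (so the fixed point $V$ exists and the Shewhart form is optimal among \emph{all} adapted rules, not merely among threshold rules), and treat the randomization at $0$ via complementary slackness so that the critical multiplier reproduces exactly case~(ii). The continuity and strict monotonicity of the likelihood-ratio cdfs assumed in footnote~\ref{foot:3} is what makes the threshold computations and the constraint inversion \eqref{eq:th1.iii} well posed.
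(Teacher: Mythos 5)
Your paragraphs 2 and 3 are essentially the paper's own machinery: the Lagrangian you write down is exactly the paper's $\ccG(T)$ after the substitution $c=\lambda(1-p)/p$, your fixed-point threshold $\nu=c+(1-p)V$ reproduces precisely the paper's choice of multiplier $\lambda=\frac{\nu}{1-p}\{1-(1-p)\Pro_\infty(\ell_1<\nu)\}-\Pro_0(\ell_1\geq\nu)$, and your calculus over $\nu$ recovers the correct case boundaries. The genuine gap is in your first, fractional-to-linear reduction. You claim it suffices to maximize $N(T)=\Pro(T=(\tau+1)^+)$ subject to the inequality constraint $\Pro(T\leq\tau)\leq\alpha$ and then ``check that the maximizer saturates the constraint.'' That check fails exactly in the regime where the theorem needs randomization. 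Take $\pi=0$ (or $\pi$ small) and $1-p\leq\alpha<1$, so that case~ii) applies with $\nu^*=0$: the ratio-optimal rules randomize between stopping at $0$ and stopping at $1$ and attain $\ccJ_{\rm S}=1$. But when $\pi=0$ stopping at $0$ contributes nothing to $N$, so no $N$-maximizer ever randomizes; moreover the rule ``stop at $1$ if $\ell_1\geq\nu$, otherwise stop at $2$'' with small $\nu>0$ is feasible (its false-alarm probability is below $1-p\leq\alpha$) and has strictly larger $N$ than any ratio-optimal rule. Hence the constrained $N$-maximizer does not saturate the constraint, is not ratio-optimal (its ratio is $\Pro_0(\ell_1\geq\tilde\nu)<1$ for the relevant threshold $\tilde\nu>0$), and your chain of implications produces nothing. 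The same defect shows up in your complementary-slackness step: the multiplier attached to the correct rule can be negative (with $\pi=0$, $\nu^*=0$ the paper's formula gives $\lambda=-1$), so the standard Lagrangian sufficiency for an \emph{inequality} constraint, which requires $\lambda\geq0$, is simply unavailable.

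The paper closes this hole \emph{before} introducing any Lagrangian, and the fix is not cosmetic. From \eqref{eq:sh_analytic}, numerator and denominator of $\ccJ_{\rm S}$ are affine in $\varpi$ and the ratio equals its maximal possible value $1$ at $\varpi=1$; hence the ratio is nondecreasing in $\varpi$, and for any feasible $T$ whose denominator exceeds $1-\alpha$ one may raise $\varpi$ until the denominator equals $1-\alpha$ exactly, without decreasing the ratio. This is a statement about the \emph{ratio} problem and it legitimizes restricting attention to rules that meet the false-alarm constraint with equality; only then does one maximize the numerator subject to the \emph{equality} constraint, where the sign of the multiplier is irrelevant and your discounted-stopping analysis goes through verbatim. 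If you replace your first paragraph by this monotonicity-in-$\varpi$ argument, the remainder of your proposal is correct and coincides with the paper's proof.
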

\begin{proof}
The proof of Theorem\,\ref{th:1} is presented in the Appendix.
\end{proof}
\indent The exponential prior model is theoretically very appealing because it leads to well defined optimal stopping problems. However one of its key weaknesses is the need to properly specify the parameter pair $(\pi,p)$. If the two quantities are unknown and cannot be defined explicitly, a possible means to overcome this problem is to \textit{adopt a worst-case analysis} with respect to these two parameters. We should point out that this idea, detailed in the next subsection, has no equivalent in the existing literature for the classical Shiryaev criterion.

\subsection{Max-min version of the modified Shiryaev criterion}
Consider, as before, $\tau$ to be distributed according to a zero modified exponential with unknown parameters $\pi,p$. Let us denote our performance measure as
$$
\ccJ_{\rm S}(T,\pi,p)=\Pro\big(T=(\tau+1)^+|T>\tau\big),
$$
making explicit its dependence on the parameter pair $(\pi,p)$. Adopting a max-min approach, we are interested in the following constrained optimization problem
\begin{multline}
\sup_{T}\inf_{\pi,p}\ccJ_{\rm S}(T,\pi,p)=\sup_{T}\inf_{\pi,p}\Pro\big(T=(\tau+1)^+|T>\tau\big)\\
\text{over all}~T:~\Exp_\infty[T]\geq\gamma\geq1,
\label{eq:Shiryaev3}
\end{multline}
As we can see, we have replaced the false alarm probability constraint, used in the previous formulation, with a constraint on the average period between false alarms, commonly encountered in min-max approaches. The next theorem presents the optimum detection rule.

\begin{theorem}\label{th:2}
Let $\nu$ be the solution of the equation
\begin{equation}
\frac{\Pro_0(\ell_1<\nu)}{\Pro_\infty(\ell_1\geq\nu)}=\gamma,
\label{eq:th2.1}
\end{equation}
then \eqref{eq:Shiryaev3} is solved by randomizing with probability $\varpi=\Pro_0(\ell_1\geq\nu)$ between stopping at 0 and using the Shewhart stopping time with constant threshold $\nu$. The resulting stopping rule is an equalizer over all parameter pairs $(\pi,p)$; while the worst-case zero modified exponential prior is the degenerate uniform obtained by selecting $\pi(p)=\nu p/(\nu p+1-p)$ and letting $p\to0$.
\end{theorem}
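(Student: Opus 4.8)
The plan is to treat \eqref{eq:Shiryaev3} as a max-min problem and to exhibit a saddle pair formed by the stated randomized Shewhart rule, call it $T^\star$, together with a limiting least-favorable prior. Set $q=\Pro_0(\ell_1\ge\nu)$ and $\bar q=\Pro_\infty(\ell_1\ge\nu)$, so that \eqref{eq:th2.1} reads $\gamma=(1-q)/\bar q$ and the prescribed randomization is $\varpi=q$. Writing $\ccJ_{\rm S}(T,\pi,p)$ for the objective, I would establish two matching bounds at the value $q$: a lower bound by showing $\inf_{\pi,p}\ccJ_{\rm S}(T^\star,\pi,p)=q$ (so that $T^\star$ does equally well against every prior), and an upper bound by showing $\inf_{\pi,p}\ccJ_{\rm S}(T,\pi,p)\le q$ for every feasible $T$, with the infimum approached along the degenerate family $\pi=\pi(p)$, $p\to0$ of the statement. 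Since the first gives $\sup_T\inf_{\pi,p}\ccJ_{\rm S}\ge q$ and the second gives $\le q$, the max-min value is $q$ and the two objects form the claimed saddle.

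For the lower bound I would insert $T^\star$ into $\ccJ_{\rm S}(\cdot,\pi,p)$ and evaluate the numerator $\Pro(T^\star=(\tau+1)^+)$ and denominator $\Pro(T^\star>\tau)$ by conditioning on $\{\tau\le-1\}$ and on each $\{\tau=t\}$, $t\ge0$. On $\{\tau=t\}$ the first $t$ ratios obey $\Pro_\infty$ while $\ell_{t+1}$ obeys $\Pro_0$, so every conditional term carries a factor $(1-p)^t\Pro_\infty(\ell_1<\nu)^t$ and both series are geometric. The decisive algebraic fact is that the choice $\varpi=q$ forces the numerator to equal $q$ times the denominator, whence $\ccJ_{\rm S}(T^\star,\pi,p)=q$ identically in $(\pi,p)$; this is the claimed equalizer property. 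Feasibility is immediate from \eqref{eq:Average}: randomization adds nothing to $\Exp_\infty$ and $\Exp_\infty[\TS]=1/\bar q$, so $\Exp_\infty[T^\star]=(1-\varpi)/\bar q=(1-q)/\bar q=\gamma$. Consequently $\inf_{\pi,p}\ccJ_{\rm S}(T^\star,\pi,p)=q$, giving $\sup_T\inf_{\pi,p}\ccJ_{\rm S}\ge q$.

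For the upper bound I would fix the family $\pi=\pi(p)=\nu p/(\nu p+1-p)$; inserting $\pi/(1-\pi)=\nu p/(1-p)$ into $\nu^*=\tfrac{\pi}{1-\pi}\tfrac{1-p}{p}$ of Theorem~\ref{th:1} shows this is exactly the prior for which that theorem's threshold equals $\nu$. For an arbitrary feasible $T$, I would express its law under $\Pro_\infty$ and transport the single post-change observation by $\Pro_t(T=t+1)=\Exp_\infty[\ell_{t+1}\ind{T=t+1}]$; the geometric sums again collapse, and after multiplying numerator and denominator by $(\nu p+1-p)/p$ and letting $p\to0$ the objective $\ccJ_{\rm S}(T,\pi(p),p)$ tends to
\begin{equation*}
\frac{\nu\,\Pro_\infty(T=0)+\Exp_\infty[\ell_T\ind{T\ge1}]}{\nu+\Exp_\infty[T]}.
\end{equation*}
It then suffices to bound this limiting ratio by $q$ for every $T$ with $\Exp_\infty[T]\ge\gamma$. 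I would do so by a Lagrangian optimal-stopping argument with multiplier $\lambda=q-\nu\bar q\ge0$ (the sign holds since $q=\Exp_\infty[\ell_1\ind{\ell_1\ge\nu}]\ge\nu\bar q$): the stationary Bellman equation $G=-\lambda+\Exp_\infty[\max(\ell_1,G)]$ is solved by $G=\nu$, identifying the Shewhart boundary $\nu$ as optimal and yielding $\sup_T\{\nu\Pro_\infty(T=0)+\Exp_\infty[\ell_T\ind{T\ge1}]-\lambda\Exp_\infty[T]\}=\nu$. Rewriting this as $\nu\Pro_\infty(T=0)+\Exp_\infty[\ell_T\ind{T\ge1}]\le\nu+\lambda\Exp_\infty[T]$ and comparing with $q(\nu+\Exp_\infty[T])$ leaves the single term $\nu\bar q(\gamma-\Exp_\infty[T])$, which is $\le0$ precisely because $T$ is feasible. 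Hence the limiting ratio is $\le q$, so $\inf_{\pi,p}\ccJ_{\rm S}(T,\pi,p)\le q$, and the two bounds coincide at $q=\Pro_0(\ell_1\ge\nu)$; the family $\pi(p)$, $p\to0$, attains the outer infimum, so this is the announced saddle point.

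The crux is the upper bound, and within it two technical points. The more delicate is making the optimal-stopping step rigorous: one must verify that the constant value $G\equiv\nu$ is the genuine Snell solution and that stopping at the Shewhart boundary is optimal rather than merely stationary, which calls for a supermartingale/verification argument and uses the continuity and strict monotonicity of the cdf of $\ell_1$ assumed in footnote~\ref{foot:3} to exclude ties at the boundary. The second is the passage $p\to0$: interchanging the limit with the infinite geometric sums needs a uniform-integrability or monotone/dominated-convergence justification, while rules with $\Exp_\infty[T]=\infty$ must be handled separately (there the limiting denominator is infinite and the bound is trivial), so the genuine content sits in the finite-mean case.
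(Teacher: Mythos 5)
Your overall architecture---saddle point, equalizer lower bound, least-favorable family $\pi(p)$ with $p\to0$---coincides with the paper's, and your equalizer computation and feasibility check reproduce the paper's right-hand-side argument exactly. Where you genuinely diverge is the upper bound. The paper never solves a fresh optimal-stopping problem there: for each small $p$ it constructs a false-alarm level $\alpha_\epsilon(p)$ so that (a) any fixed feasible $T$ eventually lies in the class $\ccA_\epsilon(p)=\{T:\Pro(T\leq\tau)\leq\alpha_\epsilon(p)\}$ (by monotone convergence applied to $(1-(1-p)^T)/p\uparrow T$, an argument indifferent to whether $\Exp_\infty[T]$ is finite), and (b) Theorem \ref{th:1}, case ii), identifies the exact maximizer over $\ccA_\epsilon(p)$ as the Shewhart rule randomized with probability $\varpi^*_\epsilon=\Pro_0(\ell_1\geq\nu)+\epsilon$; it then lets $p\to0$ on both sides and $\epsilon\downarrow0$. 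You instead pass to the limit first and attack the limiting ratio $\big(\nu\Pro_\infty(T=0)+\Exp_\infty[\ell_T\ind{T\geq1}]\big)/\big(\nu+\Exp_\infty[T]\big)$ directly with the multiplier $\lambda=q-\nu\bar q$ and the Bellman identity $\nu=-\lambda+\Exp_\infty[\max(\ell_1,\nu)]$---which is precisely the technique the paper deploys for Theorem \ref{th:3}, transplanted here. Your algebra is correct ($G=\nu$ does solve the stationarity equation, and the residual slack is $\nu\bar q\,(\gamma-\Exp_\infty[T])\leq0$), so in the finite-mean case your route works and is more self-contained, at the cost of redoing the optimal-stopping verification that the paper gets for free from Theorem \ref{th:1}.

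The genuine gap is your treatment of stopping times with $\Exp_\infty[T]=\infty$. You claim the bound is then trivial because the limiting denominator is infinite; it is not. The numerator $\Exp_\infty[\ell_T\ind{T\geq1}]$ can be infinite as well (the estimate $\Exp_\infty[\ell_t\ind{T=t}]\leq\Pro_\infty(T>t-1)$ only dominates it by the same divergent quantity $\Exp_\infty[T]$), so the limiting ratio is of the form $\infty/\infty$; worse, since the numerator and denominator of $\ccJ_{\rm S}(T,\pi(p),p)$ both diverge as $p\to0$, the limit of the ratio need not exist and is not controlled by the separate monotone limits---the crude comparison of the two series only yields a ratio $\leq1$, not $\leq q$. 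So for infinite-mean $T$ your argument as written establishes nothing, and you must either bound $\ccJ_{\rm S}(T,\pi(p),p)$ at fixed $p>0$ (exactly what the paper's appeal to Theorem \ref{th:1} accomplishes, since that theorem's constraint is a false-alarm \emph{probability}, insensitive to finiteness of $\Exp_\infty[T]$), or supply a genuine reduction to the finite-mean case; note that the truncation $T_M=\min\{T,M\}$ used for Theorem \ref{th:3} does not transfer verbatim, because truncation is not guaranteed to increase the Shiryaev-type ratio (it removes detection events $\{T=\tau+1,\,\tau+1>M\}$ from the numerator). This is a patchable but real hole, and it is precisely the case the paper's detour through Theorem \ref{th:1} handles automatically.
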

\begin{proof}
The proof of Theorem\,\ref{th:2} can be found in the Appendix.
\end{proof}
It is surprising that a worst-case analysis results in an optimum stopping rule that requires non-trivial randomization at 0. This is quite uncommon in min-max approaches. It is basically due to the fact that, even though we follow a worst-case approach with respect to the two parameters, the underlying setup is still Bayesian thus accepting randomized optimum solutions, as was demonstrated in Theorem\,\ref{th:1}. Let us now continue our presentation with the max-min criteria introduced in \eqref{eq:lorden2},\eqref{eq:pollak2}.

\subsection{Modified Lorden and Pollak criterion}
We propose the following optimization problem
\begin{multline}
\sup_T\ccJ_{\rm L}(T)=\sup_{T}\inf_{t\ge0}\,\esinf\,\Pro_t(T=t+1|\ccF_t,T>t);\\~~\text{over all}~T:~\Exp_\infty[T]\geq\gamma\geq1,
\label{eq:Lorden2}
\end{multline}
where we maximize Lorden's modified measure \eqref{eq:lorden2} under the classical constraint on the average false alarm period. Similarly for Pollak's modified criterion \eqref{eq:pollak2}, we have
\begin{multline}
\sup_T\ccJ_{\rm P}(T)=\sup_{T}\inf_{t\ge0}\,\Pro_t(T=t+1|T>t);\\~~\text{over all}~T:~\Exp_\infty[T]\geq\gamma\geq1.
\label{eq:Pollak2}
\end{multline}
The following theorem offers the solution to both problems.

\begin{theorem}\label{th:3} The optimum stopping time that solves the max-min problems in \eqref{eq:Lorden2} and \eqref{eq:Pollak2} is the Shewhart test with constant threshold $\nu$ computed from the equation $\Pro_\infty(\ell_1\geq\nu)=1/\gamma$.
\end{theorem}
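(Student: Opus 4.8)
The plan is to solve both max-min problems at once by exploiting the elementary ordering $\ccJ_{\rm L}(T)\le\ccJ_{\rm P}(T)$, valid for every stopping time $T$: indeed, for each $t$ the essential infimum in \eqref{eq:lorden2} is dominated by the conditional expectation $\Pro_t(T=t+1\,|\,T>t)=\Exp_t\big[\Pro_t(T=t+1\,|\,\ccF_t,T>t)\,\big|\,T>t\big]$ appearing in \eqref{eq:pollak2}, and taking $\inf_t$ preserves the inequality. Consequently it suffices to establish two facts: (a) the Shewhart rule $\TS$ of the theorem is feasible and attains $\ccJ_{\rm L}(\TS)=\ccJ_{\rm P}(\TS)=\Pro_0(\ell_1\ge\nu)$; and (b) every admissible $T$ satisfies $\ccJ_{\rm P}(T)\le\Pro_0(\ell_1\ge\nu)$. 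Combining (a), (b) with $\ccJ_{\rm L}\le\ccJ_{\rm P}$ sandwiches both suprema at the common value $\Pro_0(\ell_1\ge\nu)$, proving optimality of $\TS$ for \eqref{eq:Lorden2} and \eqref{eq:Pollak2} simultaneously.

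For (a) I would use the fact that the Shewhart rule stops using only the current likelihood ratio, so that $\{\TS>t\}=\{\ell_1<\nu,\dots,\ell_t<\nu\}$ is $\ccF_t$-measurable and, on this event, $\Pro_t(\TS=t+1\,|\,\ccF_t,\TS>t)=\Pro_0(\ell_{t+1}\ge\nu)=\Pro_0(\ell_1\ge\nu)$, a constant independent of $\ccF_t$. Hence both the essential infimum and the conditional expectation equal $\Pro_0(\ell_1\ge\nu)$ for every $t$, so $\TS$ is an equalizer and $\ccJ_{\rm L}(\TS)=\ccJ_{\rm P}(\TS)=\Pro_0(\ell_1\ge\nu)$. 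Feasibility is immediate from \eqref{eq:Average}: the choice $\Pro_\infty(\ell_1\ge\nu)=1/\gamma$ gives $\Exp_\infty[\TS]=\gamma$, meeting the false-alarm constraint with equality.

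The real work is the upper bound (b). Fix an admissible $T$ and write $\beta=\ccJ_{\rm P}(T)$. Since $\{T>t\}\in\ccF_t$ and the past is nominal under $\Pro_t$, a change of measure in the last coordinate gives $\Pro_t(T=t+1)=\Exp_\infty[\ind{T=t+1}\,\ell_{t+1}]=\Exp_\infty[\ind{T>t}\,p_t]$, where $p_t=\Pro_0(\text{stop at }t+1\,|\,\ccF_t)$ is the stagewise detection probability, while $\Pro_t(T>t)=\Pro_\infty(T>t)$. The definition of $\beta$ as an infimum then yields, for every $t$,
\begin{equation*}
\beta\,\Pro_\infty(T>t)\le\Exp_\infty[\ind{T>t}\,p_t].
\end{equation*}
At this point I would invoke Neyman--Pearson stagewise: among all tests of $\Pro_\infty$ against $\Pro_0$ based on $\xi_{t+1}$, the likelihood-ratio test $\{\ell_{t+1}\ge\nu\}$ is optimal, so $p_t$ and the nominal stopping probability $q_t=\Pro_\infty(\text{stop at }t+1\,|\,\ccF_t)$ obey the linear bound $p_t\le\nu q_t+C$ with $C=\Exp_\infty[(\ell_1-\nu)^+]=\Pro_0(\ell_1\ge\nu)-\nu\,\Pro_\infty(\ell_1\ge\nu)$. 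Substituting, using $\Exp_\infty[\ind{T>t}\,q_t]=\Pro_\infty(T=t+1)$, summing over $t\ge0$, and bounding $\Pro_\infty(T\ge1)\le1$ gives $(\beta-C)\,\Exp_\infty[T]\le\nu$. Dividing by $\Exp_\infty[T]\ge\gamma=1/\Pro_\infty(\ell_1\ge\nu)$ (the case $\beta\le C$ being trivial) finally produces $\beta\le C+\nu\,\Pro_\infty(\ell_1\ge\nu)=\Pro_0(\ell_1\ge\nu)$, which is exactly (b).

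The main obstacle is the stagewise Neyman--Pearson step and the bookkeeping that surrounds it: one must justify the change-of-measure identity and the optimality of the likelihood-ratio test conditionally on $\ccF_t$ (allowing for the randomization enriching $\ccF_0$), handle the degenerate events where the conditioning sets are null, and confirm that the telescoping sum correctly converts the worst-case (infimum over $t$) requirement into a single global inequality controlled by the false-alarm budget $\Exp_\infty[T]\ge\gamma$. The continuity and strict monotonicity of the cdf of $\ell_1$ assumed in footnote \ref{foot:3} guarantee that the threshold $\nu$ with $\Pro_\infty(\ell_1\ge\nu)=1/\gamma$ exists and that no auxiliary randomization is needed in the Neyman--Pearson test, which keeps the bound tight and matched by the equalizer of step (a).
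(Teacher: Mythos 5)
Your proof is correct, and while it shares the paper's overall skeleton (the Shewhart equalizer attaining $\Pro_0(\ell_1\geq\nu)$, the identity $\Pro_t(T>t)=\Pro_\infty(T>t)$, and the change of measure $\Pro_t(T=t+1)=\Exp_\infty[\ell_{t+1}\ind{T=t+1}]$, which is exactly the paper's inequality \eqref{eq:C.0}), it reaches the crucial upper bound by a genuinely different route. The paper sums \eqref{eq:C.0} over $t$ to obtain $\ccJ_{\rm L}(T)\Exp_\infty[T]\leq\Exp_\infty[\ell_T]$, first truncating via $T_M=\min\{T,M\}$ to dispose of stopping times with $\Exp_\infty[T]=\infty$, then reduces to the class satisfying the false alarm constraint with equality (by adjusting the randomization probability $\varpi$), and finally maximizes $\Exp_\infty[\ell_T]$ subject to $\Exp_\infty[T]=\gamma$ through a Lagrangian argument that appeals to standard optimal stopping theory, with multiplier $\lambda=\Pro_0(\ell_1\geq\nu)-\nu\Pro_\infty(\ell_1\geq\nu)$. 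You instead apply the Lagrangian idea \emph{stagewise}: your Neyman--Pearson inequality $p_t\leq\nu q_t+C$ (note that your constant $C$ is precisely the paper's multiplier $\lambda$) is inserted pointwise before summation, turning \eqref{eq:C.0} directly into $(\beta-C)\Pro_\infty(T>t)\leq\nu\Pro_\infty(T=t+1)$ and, after summing, $(\beta-C)\Exp_\infty[T]\leq\nu$. This buys three simplifications: no appeal to optimal stopping theory, no truncation step (if $\Exp_\infty[T]=\infty$ the inequality forces $\beta\leq C\leq\Pro_0(\ell_1\geq\nu)$ at once), and no need to reduce to the equality-constrained class. Your second structural deviation is to handle both criteria through the ordering $\ccJ_{\rm L}(T)\leq\ccJ_{\rm P}(T)$ and prove the upper bound only for Pollak's measure, whereas the paper argues Lorden's case in detail and remarks that Pollak's follows by starting from \eqref{eq:C.0} (also citing Pollak and Krieger); both directions are legitimate because the Shewhart rule is an equalizer attaining the common value $\Pro_0(\ell_1\geq\nu)$ under either measure.
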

\begin{proof}
The proof for \eqref{eq:Pollak2} (actually under a more general semi-Bayesian setting) is given in Pollak and Krieger \cite{Pollak2}, while the one for \eqref{eq:Lorden2} is detailed in the Appendix.
\end{proof}
We note that in the case of Pollak's modified measure we have an \textit{exact} optimality result. This should be compared with the original criterion $\cJ_{\rm P}(T)$ in \eqref{eq:pollak} where (third-order) asymptotically optimum detection rules are available (see \cite{Pollak},\cite{Tartakovsky}).

The simplicity of the probability maximizing approach allows for the straightforward solution of problems which, in the classical changepoint literature (involving expected delays), have been open for many years. It is worth analyzing two such characteristic cases in detail and develop the corresponding optimal solutions.

\section{Independent, non-identically distributed observations}
Let $\{f_{\infty,t}(x)\}$, $\{f_{0,t}(x)\}$ denote two pdf sequences and consider the case where the observation process $\{\xi_t\}$ is independent but not identically distributed, following the first pdf sequence up to some changetime $\tau$ and switching to the second after $\tau$. We are interested in detecting the change optimally following the max-min approach proposed in \eqref{eq:Lorden2} or \eqref{eq:Pollak2}. We recall that the likelihood ratio $\ell_t=f_{0,t}(\xi_t)/f_{\infty,t}(\xi_t)$ has now time-varying statistics. We have the following theorem that provides the optimum solution to both problems.

\begin{theorem}\label{th:4}
The optimum stopping time that solves \eqref{eq:Lorden2} and \eqref{eq:Pollak2} for the case of independent and non-identically distributed observations, is the Shewhart stopping time $\TS=\inf\{t>0:\ell_t\geq\nu_t(\beta)\}$, where the sequence of thresholds $\{\nu_t(\beta)\}$ is obtained by solving the equations
\begin{equation}
\Pro_{0,t}\big(\ell_t\geq\nu_t(\beta)\big)=\beta;~\forall t>0,
\label{eq:th4.1}
\end{equation}
with parameter $\beta\in(0,1)$. Assuming for each $\beta$ that
\begin{equation}
\sup_{t>0}\Pro_{\infty,t}\big(\ell_t<\nu_t(\beta)\big)<1,
\label{eq:th4.1.2}
\end{equation}
this parameter is specified by requiring the false alarm constraint to be satisfied with equality, that is,
\begin{equation}
\Exp_\infty[\TS]=1+\sum_{t=1}^\infty\prod_{l=1}^t\Pro_{\infty,l}\big(\ell_l<\nu_l(\beta)\big)=\gamma.
\label{eq:th4.2}
\end{equation}
\end{theorem}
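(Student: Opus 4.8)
The plan is to establish Theorem~\ref{th:4} in two steps: first show that the proposed time-varying Shewhart rule $\TS=\inf\{t>0:\ell_t\geq\nu_t(\beta)\}$ is an \emph{equalizer} whose worst-case detection probability equals $\beta$ exactly, and then prove a matching converse showing that no feasible stopping time can exceed $\beta$. For the equalizer part I would exploit independence: on $\{\TS>t\}$ the event $\{\TS=t+1\}$ coincides with $\{\ell_{t+1}\geq\nu_{t+1}(\beta)\}$, which depends only on $\xi_{t+1}$. Since under $\Pro_t$ we have $\xi_{t+1}\sim f_{0,t+1}$ and $\xi_{t+1}$ is independent of $\ccF_t$, the defining relation \eqref{eq:th4.1} gives $\Pro_t(\TS=t+1\mid\ccF_t,\TS>t)=\Pro_{0,t+1}(\ell_{t+1}\geq\nu_{t+1}(\beta))=\beta$, a deterministic constant. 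Hence the $\esinf$ and the plain conditional probability coincide, so $\ccJ_{\rm L}(\TS)=\ccJ_{\rm P}(\TS)=\beta$ in \eqref{eq:lorden2} and \eqref{eq:pollak2}, and the worst case over $t$ is attained at every $t$.

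Next I would verify that the calibration equation \eqref{eq:th4.2} is solvable. Computing $\Exp_\infty[\TS]=\sum_{t\geq0}\Pro_\infty(\TS>t)$ and using independence gives $\Pro_\infty(\TS>t)=\prod_{l=1}^t\Pro_{\infty,l}(\ell_l<\nu_l(\beta))$, which yields the closed form in \eqref{eq:th4.2}. Assumption \eqref{eq:th4.1.2} bounds each factor by some $\rho<1$, so the series converges and $\Exp_\infty[\TS]$ is finite. By the continuity and strict monotonicity assumption of footnote~\ref{foot:3}, $\nu_t(\beta)$ is a continuous, strictly decreasing function of $\beta$; consequently each $\Pro_{\infty,l}(\ell_l<\nu_l(\beta))$ is decreasing in $\beta$, and $\Exp_\infty[\TS]$ is a continuous, strictly decreasing function of $\beta$, growing without bound as $\beta\to0$ and tending to $1$ as $\beta\to1$. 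This guarantees a unique $\beta\in(0,1)$ solving \eqref{eq:th4.2} for any prescribed $\gamma\geq1$.

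The main work — and the step I expect to be the principal obstacle — is the converse for Lorden's measure. Here I would argue by contradiction: suppose a feasible $T$ satisfies $\ccJ_{\rm L}(T)=\beta+\epsilon$ with $\epsilon>0$, so that $\Pro_t(T=t+1\mid\ccF_t,T>t)\geq\beta+\epsilon$ almost surely on $\{T>t\}$ for every $t$. Fixing a history in $\{T>t\}$, the (possibly randomized) decision to stop at $t+1$ acts on $\xi_{t+1}$ as a test of $f_{\infty,t+1}$ against $f_{0,t+1}$ whose power under $\Pro_{0,t+1}$ is at least $\beta+\epsilon$. By the randomized Neyman--Pearson lemma the likelihood-ratio region $\{\ell_{t+1}\geq\nu_{t+1}(\beta+\epsilon)\}$ minimizes the size under $\Pro_{\infty,t+1}$ among all tests of that power, so the conditional false-alarm stopping probability is at least $\Pro_{\infty,t+1}(\ell_{t+1}\geq\nu_{t+1}(\beta+\epsilon))$, which by strict monotonicity strictly exceeds $\Pro_{\infty,t+1}(\ell_{t+1}\geq\nu_{t+1}(\beta))$. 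This yields the pathwise recursion $\Pro_\infty(T>t+1\mid\ccF_t,T>t)\leq\Pro_{\infty,t+1}(\ell_{t+1}<\nu_{t+1}(\beta+\epsilon))$; iterating (using $\Pro_\infty(T>0)\leq1$) and summing gives $\Exp_\infty[T]\leq\Exp_\infty[\TS(\beta+\epsilon)]<\Exp_\infty[\TS(\beta)]=\gamma$, contradicting feasibility. Thus $\ccJ_{\rm L}(T)\leq\beta$ for every feasible $T$, and together with the equalizer property $\TS$ solves \eqref{eq:Lorden2}.

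Finally, for Pollak's measure, where only the averaged quantity $\Pro_t(T=t+1\mid T>t)\geq\beta$ is available rather than a pathwise bound, I would replace the pathwise Neyman--Pearson step by an averaged one: writing the minimal false-alarm size as a convex, increasing function of the attained power and applying Jensen's inequality conditionally on $\{T>t\}$ recovers the same recursion $\Pro_\infty(T>t+1)\leq\Pro_{\infty,t+1}(\ell_{t+1}<\nu_{t+1})\,\Pro_\infty(T>t)$, after which the summation argument is identical. Since $\ccJ_{\rm P}(T)\geq\ccJ_{\rm L}(T)$ in general, this stronger converse is genuinely needed rather than a corollary of the Lorden bound; it is the natural non-i.i.d.\ extension of the argument of Pollak and Krieger \cite{Pollak2}, and it shows that the same Shewhart rule solves \eqref{eq:Pollak2} as well.
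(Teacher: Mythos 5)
Your proposal is correct, but its core (the converse bound $\ccJ\leq\beta$) follows a genuinely different route from the paper's. The paper never invokes the Neyman--Pearson lemma: it introduces the auxiliary sequences $\omega_t=\Exp_\infty[\TS-t|\TS>t]$ and $c_t=\omega_{t+1}/\nu_{t+1}$, multiplies the basic inequality $\ccJ_{\rm L}(T)\Pro_\infty(T>t)\leq\Exp_\infty[\ell_{t+1}\ind{T=t+1}]$ by the weights $c_t$ and sums, obtaining $\ccJ_{\rm L}(T)\leq\Exp_\infty[\ell_T c_{T-1}]/\Exp_\infty[\sum_{t=0}^{T-1}c_t]$; it then maximizes this weighted objective subject to $\Exp_\infty[T]=\gamma$ by a Lagrange multiplier, a truncation $T_M=\min\{T,M\}$, and a finite-horizon backward induction showing the value functions satisfy $V_t(\ell_t)\leq\max\{\ell_t c_{t-1},\omega_t\}$, which forces the bound $\beta$; the Pollak case is then free of charge because the starting inequality holds verbatim under Pollak's (averaged) measure. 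You instead argue stepwise: any feasible $T$ with $\ccJ_{\rm L}(T)\geq\beta+\epsilon$ must, by the dual (minimal-size-given-power) form of Neyman--Pearson applied conditionally on $\ccF_t$ on $\{T>t\}$, stop under $\Pro_\infty$ at least as often per step as the Shewhart rule calibrated at power $\beta+\epsilon$, whence $\Exp_\infty[T]\leq\Exp_\infty[\TS(\beta+\epsilon)]<\Exp_\infty[\TS(\beta)]=\gamma$ (strict monotonicity coming from Footnote~\ref{foot:3}), contradicting feasibility; for Pollak you correctly note the pathwise bound is unavailable and patch it with convexity of the minimal-size-versus-power (inverse ROC) function plus conditional Jensen, which does recover the same recursion $\Pro_\infty(T>t+1)\leq\Pro_{\infty,t+1}(\ell_{t+1}<\nu_{t+1}(\beta+\epsilon))\Pro_\infty(T>t)$. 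Both arguments are sound. Your route is more elementary and self-contained (no optimal stopping machinery, no auxiliary $\omega_t,c_t$, no truncation/limit step), makes the statistical mechanism transparent, and even shows strict suboptimality of non-likelihood-ratio rules; its costs are that the Pollak case requires the extra ROC-convexity lemma (which should be stated and proved, not just cited as "natural"), whereas in the paper's scheme Pollak is literally the same proof, and that the paper's weighted-sum/Lagrangian template is the one that recycles directly in Theorems~\ref{th:3}, \ref{th:5} and \ref{th:6}. One small addition on the calibration step: you assert uniqueness of $\beta$ solving \eqref{eq:th4.2}, which indeed follows from the strict monotonicity in Footnote~\ref{foot:3}, while the paper only needs (and only proves) existence via Bounded Convergence under \eqref{eq:th4.1.2}.
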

\begin{proof}
The proof of Theorem\,\ref{th:4} can be found in the Appendix.
\end{proof}
Due to the time-varying statistics, the threshold sequence needs to be time-varying as well. With \eqref{eq:th4.1} we assure that the Shewhart test is an equalizer over time, a very important property for proving its optimality. This is indeed true since $\Pro_t(\TS=t+1|\ccF_t,\TS>t)=\Pro_t(\TS=t+1|\TS>t)=\Pro_{0,t}(\ell_t\geq\nu_t(\beta))=\beta$. Of course this condition still generates an ambiguity since $\beta$ is unknown. This last parameter is then specified by forcing the Shewhart stopping time to satisfy the false alarm constraint with equality through \eqref{eq:th4.2}. Condition \eqref{eq:th4.1.2} guarantees summability of the series in \eqref{eq:th4.2} and also simplifies, considerably, the proof of our theorem. It can be relaxed but at the expense of a far more involved analysis.

\vskip0.2cm
\textsc{Example 2}: Consider the case where $f_{\infty,t}(x)$ is time invariant Gaussian with mean 0 and variance 1, while $f_{0,t}(x)$ is Gaussian with mean $\mu_t>0$ and variance 1. The sequence of thresholds then becomes
$$
\nu_t(\beta)=e^{0.5\mu_t^2+\mu_ts(\beta)};~~\text{where}~s(\beta)=\Phi^{-1}(1-\beta),
$$
and $\Phi^{-1}(x)$ denotes the inverse cdf of a standard Gaussian. Assumption \eqref{eq:th4.1.2} is valid if the sequence of means $\{\mu_t\}$ is upper bounded by a finite constant. To find $\beta$, we observe that
$\Pro_\infty(\ell_t\leq\nu_t)=\Phi(\mu_t+s(\beta))$. Since there is a one-to-one correspondence between $\beta\in(0,1)$ and $s(\beta)\in\Real$, we can instead solve \eqref{eq:th4.2} for $s$, that is,
$$
1+\sum_{t=1}^\infty\prod_{l=1}^t\Phi(\mu_l+s)=\gamma
$$
and compute the optimum performance as $\beta=1-\Phi(s)=\Phi(-s)$.

\section{Multiple post-change probability measures}
Consider now the change detection problem with more than one post-change possibilities. Our observation sequence $\{\xi_t\}$ is i.i.d.~before and after the change with a common
pdf $f_\infty(\xi)$ before the change and two\footnote{Extension to more than two pdfs is straightforward.} different pdf possibilities $f_0^1(\xi),f_0^2(\xi)$ after the change. Following a pure non-Bayesian approach (see Pollak and Krieger \cite{Pollak2} for semi-Bayesian formulations) we extend the definition of our performance measures in order to account for the multiple post-change distributions. Define
\begin{align*}
\ccJ_{\rm L}(T)&=\min_{i=1,2}\inf_{t\ge0}{\esinf}\;\Pro_t^i(t<T\le t+m|\ccF_t,T>t)\\
\ccJ_{\rm P}(T)&=\min_{i=1,2}\inf_{t\ge0}\Pro_t^i(t<T\le t+m|T>t),
\end{align*}
where $\Pro_t^i$ is the measure induced by a change at time $t$ with the alternative pdf being $f_0^i(\xi)$. Consequently in our criterion we include an additional minimization over the possible alternative measures.

Limiting, again, ourselves to the special case $m=1$, we are interested in solving the following constrained optimization problems
\begin{multline}
\sup_T\ccJ_{\rm L}(T)=\sup_T\min_{i=1,2}\inf_{t\ge0}\esinf\Pro_t^i(T=t+1|\ccF_t,T>t),\\
\text{over all}~T:~\Exp_\infty[T]\ge\gamma\geq1,
\label{eq:lorden4}
\end{multline}
for the Lorden and
\begin{multline}
\sup_T\ccJ_{\rm P}(T)=\sup_T\min_{i=1,2}\inf_{t\ge0}\Pro_t^i(T=t+1|T>t),\\
\text{over all}~T:~\Exp_\infty[T]\ge\gamma\geq1,
\label{eq:pollak4}
\end{multline}
for the Pollak criterion. We note that we have two sequences of likelihood ratios, namely $\{\ell_t^1\}$ and $\{\ell_t^2\}$ defined as $\ell_t^i=f_0^i(\xi_t)/f_\infty(\xi_t),~i=1,2$. For each $q\in[0,1]$ we define a threshold $\nu(q)\geq0$, so that the following version of the Shewhart test
\begin{equation}
\TS(q)=\inf\{t>0:(1-q)\ell_t^1+q\ell_t^2\geq\nu(q)\},
\label{eq:Sq}
\end{equation}
satisfies the equation
\begin{equation}
\Pro_\infty\big((1-q)\ell_1^1+q\ell_1^2\geq\nu(q)\big)=\frac{1}{\gamma}.
\label{eq:nuq}
\end{equation}
The next theorem demonstrates that by proper selection of the parameter $q$, the corresponding stopping time solves both optimization problems.

\begin{theorem}\label{th:5}
For the solution of \eqref{eq:lorden4} and \eqref{eq:pollak4} we distinguish three cases:

i) If\/ $\Pro_0^2(\ell_1^1\geq\nu(0))\geq\Pro_0^1(\ell_1^1\geq\nu(0))$, then the optimum test is $\TS(0)$.

ii) If\/ $\Pro_0^1(\ell_1^2\geq\nu(1))\geq\Pro_0^2(\ell_1^2\geq\nu(1))$ then the optimum test is $\TS(1)$.

iii) If there is $q\in(0,1)$ such that 
\begin{equation}
\Pro_0^1\big((1-q)\ell_1^1+q\ell_1^2\geq\nu(q)\big)=\Pro_0^2\big((1-q)\ell_1^1+q\ell_1^2\geq\nu(q)\big),
\label{eq:th5.1}
\end{equation}
then the optimum test is $\TS(q)$. For each $\gamma\geq1$, only one of i), ii) and iii) applies.
\end{theorem}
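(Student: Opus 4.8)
The plan is to reduce the two-hypothesis (minimum over $i$) problem to a single post-change problem for a \emph{mixture} measure, for which Theorem~\ref{th:3} already supplies the answer. For $q\in[0,1]$ write $\lambda_t=(1-q)\ell_t^1+q\ell_t^2$ for the mixture likelihood ratio; this is exactly the likelihood ratio of the mixture density $f_0^q=(1-q)f_0^1+qf_0^2$ against $f_\infty$, so the test $\TS(q)$ of \eqref{eq:Sq} is the ordinary Shewhart test for detecting a change into $f_0^q$. Set $g_i(q)=\Pro_0^i(\lambda_1\ge\nu(q))$. By the geometric computation of \eqref{eq:Average}, the defining relation \eqref{eq:nuq} gives $\Exp_\infty[\TS(q)]=\gamma$, so $\TS(q)$ is admissible, and the detection probability of $\TS(q)$ under the mixture is $(1-q)g_1(q)+qg_2(q)$.

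\emph{Converse.} I would first prove that $\ccJ_{\rm L}(T)\le(1-q)g_1(q)+qg_2(q)$ for every admissible $T$ and every $q\in[0,1]$, and likewise for $\ccJ_{\rm P}$. The crucial point is that $\{T>t\}$ and the $\sigma$-algebra $\ccF_t$ are generated by $\xi_1,\dots,\xi_t$, which under each $\Pro_t^i$ are i.i.d.\ $f_\infty$ independently of $i$; hence conditioning on $\ccF_t$ does not reweight the mixture, the conditional law of $\xi_{t+1}$ under $\Pro_t^q=(1-q)\Pro_t^1+q\Pro_t^2$ is $f_0^q$, and $A_t^i:=\Pro_t^i(T=t+1\mid\ccF_t,T>t)$ is affine in the weight, $A_t^q=(1-q)A_t^1+qA_t^2$. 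Using $\min(m_1,m_2)\le(1-q)m_1+qm_2$ with $m_i=\inf_t\esinf A_t^i$, together with the superadditivity bound $\inf_t\esinf A_t^q\ge(1-q)m_1+qm_2$, yields $\ccJ_{\rm L}(T)\le\inf_t\esinf A_t^q$, which is precisely Lorden's modified criterion for the single post-change law $f_0^q$. Theorem~\ref{th:3} applied to $f_0^q$ bounds this by $(1-q)g_1(q)+qg_2(q)$; dropping the essential infimum gives the same conclusion for $\ccJ_{\rm P}$. Taking the infimum over $q$ gives $\sup_T\ccJ(T)\le\inf_{q\in[0,1]}[(1-q)g_1(q)+qg_2(q)]$, where $\ccJ$ denotes either modified criterion.

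\emph{Achievability.} Since $\{\lambda_t\}$ is i.i.d.\ under every measure, $\TS(q)$ is an equalizer: $\Pro_t^i(\TS(q)=t+1\mid\ccF_t,\TS(q)>t)=g_i(q)$ for all $t$, so $\ccJ(\TS(q))=\min(g_1(q),g_2(q))$ for both criteria. The three cases are exactly the situations in which the chosen $q^*$ satisfies $\min(g_1(q^*),g_2(q^*))=(1-q^*)g_1(q^*)+q^*g_2(q^*)$: in i) $q^*=0$ with $g_1(0)\le g_2(0)$, in ii) $q^*=1$ with $g_2(1)\le g_1(1)$, and in iii) $g_1(q^*)=g_2(q^*)$ by \eqref{eq:th5.1}. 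This identity makes the chain
\begin{multline*}
\min(g_1(q^*),g_2(q^*))=\ccJ(\TS(q^*))\le\sup_T\ccJ(T)\\
\le\inf_{q}[(1-q)g_1(q)+qg_2(q)]\le(1-q^*)g_1(q^*)+q^*g_2(q^*)=\min(g_1(q^*),g_2(q^*))
\end{multline*}
collapse to equalities, so $\TS(q^*)$ is optimal for \eqref{eq:lorden4} and \eqref{eq:pollak4}.

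\emph{Exhaustiveness and the main difficulty.} Setting $h(q)=g_1(q)-g_2(q)$, case i) is $h(0)\le0$ and case ii) is $h(1)\ge0$; if both fail then $h(0)>0>h(1)$, and continuity of $q\mapsto\nu(q)$ and of $q\mapsto g_i(q)$, which follows from the regularity of the likelihood-ratio cdfs assumed in footnote~\ref{foot:3}, together with the intermediate value theorem produces an interior root, i.e.\ case iii). I expect the delicate part to be the uniqueness assertion that the three alternatives do not overlap, which I would secure by showing that $g_1$ is nonincreasing and $g_2$ nondecreasing in $q$, so that $h$ is strictly monotone and its sign singles out one case. Proving this monotonicity is nontrivial because the threshold $\nu(q)$ is itself coupled to $q$ through the false-alarm identity $\Pro_\infty(\lambda_1\ge\nu(q))=1/\gamma$; controlling how $\nu(q)$ moves as the mixture tilts toward $f_0^2$ is the step I anticipate will require the most care.
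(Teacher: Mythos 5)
Your converse--achievability core is correct and is essentially the paper's own argument, organized somewhat more cleanly. The paper likewise bounds $\ccJ_{\rm L}(T)\Pro_\infty(T>t)$ by the mixture quantity $\Exp_\infty[\{(1-q)\ell_{t+1}^1+q\ell_{t+1}^2\}\ind{T=t+1}]$, sums over $t$, and then reruns the optimal-stopping argument of Theorem~\ref{th:3} for the mixture likelihood ratio; you instead invoke Theorem~\ref{th:3} as a black box for the mixture density $f_0^q=(1-q)f_0^1+qf_0^2$, which is legitimate here because with $m=1$ only the conditional law of $\xi_{t+1}$ given $\ccF_t$ matters and, as you correctly note, conditioning on pre-change data does not reweight the mixture. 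Your sandwich chain, keyed to the identity $\min(g_1(q^*),g_2(q^*))=(1-q^*)g_1(q^*)+q^*g_2(q^*)$ in each of the three cases, handles i), ii), iii) uniformly, where the paper treats i)/ii) by a separate but equivalent argument. One small debt you leave unpaid: to apply Theorem~\ref{th:3} to $\lambda_1=(1-q)\ell_1^1+q\ell_1^2$ and to run your intermediate-value argument, you need the cdf of $\lambda_1$ to be continuous and strictly monotone and the maps $q\mapsto\nu(q)$, $q\mapsto g_i(q)$ to be continuous. The paper proves these facts explicitly (its display \eqref{eq:th5.10} together with Bounded Convergence), whereas you assert they ``follow from'' footnote~\ref{foot:3}, which as stated covers $\ell_1^1,\ell_1^2$ individually, not their mixture.

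The genuine gap is the final assertion that only one of i), ii), iii) applies. You defer this to a monotonicity claim --- $g_1$ nonincreasing and $g_2$ nondecreasing in $q$, hence $h=g_1-g_2$ strictly monotone --- which you acknowledge you have not proved, and which is both delicate and harder than necessary. Writing $V(q)=\sup_{R:\,\Pro_\infty(R)=1/\gamma}\big[(1-q)\Pro_0^1(R)+q\Pro_0^2(R)\big]$ as a supremum of affine functions of $q$ shows that $-h(q)$ is a subgradient of the convex function $V$, so $h$ is nonincreasing; but this gives no strictness, and without strictness the degenerate possibility $h\equiv0$ leaves the three cases overlapping, so your plan as sketched does not close the argument. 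The paper avoids any monotonicity in $q$: it argues by contradiction from the \emph{uniqueness} of the maximizer in Theorem~\ref{th:3}. For instance, if i) and ii) both held, then
\begin{equation*}
\ccJ_{\rm L}^1\big(\TS(0)\big)\le\ccJ_{\rm L}^2\big(\TS(0)\big)\le\ccJ_{\rm L}^2\big(\TS(1)\big)\le\ccJ_{\rm L}^1\big(\TS(1)\big),
\end{equation*}
where the outer inequalities are the two case conditions and the middle one is optimality of $\TS(1)$ for $\ccJ_{\rm L}^2$; this says $\TS(1)$ does at least as well as $\TS(0)$ for the criterion $\ccJ_{\rm L}^1$ that $\TS(0)$ \emph{uniquely} maximizes, a contradiction since $\TS(1)\neq\TS(0)$ with probability one. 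The pairs i)/iii) and ii)/iii) are excluded by the same device, using that the convex combination $(1-q)\ccJ_{\rm L}^1+q\ccJ_{\rm L}^2$ is maximized by $\TS(q)$ and that, under case iii), $\TS(q)$ is an equalizer across the two post-change measures. You should replace your monotonicity plan with this argument, or else actually supply the strict monotonicity you currently only conjecture.
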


\begin{proof}
The proof of Theorem\,\ref{th:5} can be found in the Appendix.
\end{proof}

We can use the previous outcome to find solutions for Lorden's \textit{original} criterion involving average detection delays when there are multiple post-change probabilities. The goal is to obtain a result similar to the one presented in Section\,\ref{ssec:1.1} for the Shewhart rule of Theorem\,\ref{th:5}.

\subsection{Multiple post-change measures with Lorden's original criterion}\label{ssec:4.1}
Consider the Lorden criterion in \eqref{eq:lorden} properly extended to cover multiple post-change probability distributions. In particular we propose
$$
\cJ_{\rm L}(T)=\max_{i=1,2}\sup_{t\ge0}\,\esup\,\Exp^i_t[T-t|\ccF_t,T>t].
$$
We are then interested in the following min-max constrained optimization problem
\begin{multline}
\inf_T\cJ_{\rm L}(T)=\inf_T\max_{i=1,2}\sup_{t\ge0}\,\esup\,\Exp^i_t[T-t|\ccF_t,T>t]\\
\text{over all}~T:~\Exp_\infty[T]\geq\gamma\geq1.
\label{eq:lorden_M}
\end{multline}

This problem has been open for many years. Existing results typically refer to the two-sided CUSUM (2-CUSUM) and demonstrate that this rule exhibits different levels of asymptotic optimality. For example in Hadjiliadis and Moustakides \cite{Hadjiliadis1} and Hadjiliadis and Poor \cite{Hadjiliadis2}, it is proved that specially designed 2-CUSUM tests enjoy second and third order asymptotic optimality when detecting changes in the constant drift of a Brownian Motion. Dragalin \cite{Dragalin} provides first order asymptotically optimum 2-CUSUM rules for the case of single parameter exponential families. 

With the next theorem we present the analog of Section\,\ref{ssec:1.1} for the case of two post-change probability measures. In particular we demonstrate that the Shewhart test of Theorem\,\ref{th:5} can be the \textit{exact} solution to \eqref{eq:lorden_M} provided threshold $\nu$ (hence parameter $\gamma$) takes values within a range that we explicitly identify. The next theorem presents the precise form of our claim. We recall that the two likelihood ratios $\ell_t^i$ are known functions of the observation $\xi_t$.

\begin{theorem}\label{th:6}
With $\TS(q)$ defined in \eqref{eq:Sq} and \eqref{eq:nuq}, we distinguish three cases that can provide partial solution to \eqref{eq:lorden_M}:

i) If\/ $\Pro_0^2\big(\ell_1^1\geq\nu(0)\big)\geq\Pro_0^1\big(\ell_1^1\geq\nu(0)\big)$ with $1\geq\nu(0)\geq0$, then the optimum test is $\TS(0)$.

ii) If\/ $\Pro_0^1\big(\ell_1^2\geq\nu(1)\big)\geq\Pro_0^2\big(\ell_1^2\geq\nu(1)\big)$  with $1\geq\nu(1)\geq0$, then the optimum test is $\TS(1)$.

iii) If there is $q\in(0,1)$ with $1\geq\nu(q)\geq0$ such that 
\begin{equation}
\Pro_0^1\big((1-q)\ell_1^1+q\ell_1^2\geq\nu(q)\big)=\Pro_0^2\big((1-q)\ell_1^1+q\ell_1^2\geq\nu(q)\big),
\label{eq:th6.1}
\end{equation}
and 
\begin{multline}
\min\{q+(1-q)\inf_{\xi_1\in\cA_1\cap\cA_2^c}\ell_1^1,(1-q)+q\inf_{\xi_1\in\cA_1^c\cap\cA_2}\ell_1^2\}\geq\\
\nu(q)\geq
\inf_{\xi_1\in\cA_1\cap\cA_2}\{(1-q)\ell_1^1+q\ell_1^2\}
\label{eq:th6.2}
\end{multline}
where $\cA_i=\{\xi_1:\ell_1^i\leq1\}$ and $\cA_i^c$ its complement, then the optimum test is $\TS(q)$.
\end{theorem}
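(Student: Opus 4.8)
The plan is to mirror the single-measure argument of Section~\ref{ssec:1.1}: in the restricted threshold range assumed here I would exhibit a CUSUM recursion that on one hand collapses to $\TS(q)$ and on the other hand can be certified optimal, so that $\TS(q)$ inherits exact optimality for \eqref{eq:lorden_M}. Writing $\bar\ell_t=(1-q)\ell_t^1+q\ell_t^2$ for the mixture likelihood ratio (this is exactly the likelihood ratio of the single density $(1-q)f_0^1+qf_0^2$ against $f_\infty$), the first observation is that $\TS(q)$ is the ordinary Shewhart test built on $\bar\ell_t$, and that the associated CUSUM statistic $\bar Y_t=\max\{\bar Y_{t-1},1\}\bar\ell_t$ reduces to $\TS(q)$ precisely when $\nu(q)\le1$, by the reasoning of Section~\ref{ssec:1.1}: on the event of no prior stopping $\bar Y_{t-1}<\nu(q)\le1$, forcing $\bar Y_t=\bar\ell_t$. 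By \cite{Moustakides1} this CUSUM is the exact minimiser of Lorden's original criterion for the single post-change law $(1-q)f_0^1+qf_0^2$ under $\Exp_\infty[T]\ge\gamma$. For the value of $\TS(q)$ itself I would use memorylessness: under $\Pro_t^i$, given $\TS(q)>t$ the delay is geometric, so $\cJ_{\rm L}^i(\TS(q))=1/p_i$ with $p_i=\Pro_0^i(\bar\ell_1\ge\nu(q))$, independently of $t$ and $\ccF_t$; hence $\cJ_{\rm L}(\TS(q))=1/\min_i p_i$, and under the equalizer identity \eqref{eq:th6.1} this becomes $1/\bar p$ with $\bar p:=p_1=p_2$.

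Cases i) and ii) then follow cleanly and do not require \eqref{eq:th6.2}. In case i) one has $q=0$, so $\TS(0)$ is the Shewhart test on $\ell_t^1$, which by \eqref{eq:nuq} and the i.i.d.\ formula \eqref{eq:Average} satisfies $\Exp_\infty[\TS(0)]=\gamma$, and which (since $\nu(0)\le1$) coincides with the CUSUM for the single law $f_0^1$. For any admissible $T$ I would write $\cJ_{\rm L}(T)\ge\cJ_{\rm L}^1(T)\ge\cJ_{\rm L}^1(\TS(0))=1/p_1(0)$, the middle inequality being the single-measure optimality of \cite{Moustakides1}; condition i), namely $p_2(0)\ge p_1(0)$, guarantees that $f_0^1$ is the binding post-change law, so $\cJ_{\rm L}(\TS(0))=1/p_1(0)$ and the bound is tight. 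Case ii) is symmetric with the roles of the two measures interchanged.

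The genuine difficulty is case iii), where $q\in(0,1)$ and $\TS(q)$ is the \emph{mixture} Shewhart test rather than the Shewhart test of either pure law; consequently no reduction to a single $f_0^i$ is available. I would stress that the naive domination $\cJ_{\rm L}(T)\ge\bar{\cJ}_{\rm L}(T)$ against the single-change mixture criterion $\bar{\cJ}_{\rm L}$ \emph{fails} for general $T$, since per-observation mixing can inflate the mixture delay beyond the worst pure delay; hence the lower bound cannot be inherited from the mixture problem term by term, and must instead be the constrained statement that every $T$ with $\Exp_\infty[T]\ge\gamma$ satisfies $\cJ_{\rm L}(T)\ge1/\bar p$. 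I would prove this by adapting the optimal-stopping/Lagrangian argument of \cite{Moustakides1} to the statistic $\bar\ell_t$, the point being that condition \eqref{eq:th6.2} is exactly the requirement that the continuation region $\{\bar\ell_1<\nu(q)\}$ lie inside $\cA_1\cap\cA_2=\{\ell_1^1\le1\}\cap\{\ell_1^2\le1\}$; this is what forces the solution of the auxiliary stopping problem to be memoryless, i.e.\ to be $\TS(q)$, after which \eqref{eq:th6.1} pins down $q$ and certifies $1/\bar p$ as the common optimal value. The main obstacle is precisely this last step: showing that \eqref{eq:th6.2} excludes the adversarial, non-memoryless competitors that could otherwise exploit the two post-change laws to undercut the mixture Shewhart rule, and that once such competitors are ruled out the constrained optimum over all admissible $T$ is attained by $\TS(q)$.
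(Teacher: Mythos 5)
Your cases i) and ii) are correct and coincide with the paper's own argument: both rest on the exact single-measure optimality of Section~\ref{ssec:1.1} (valid precisely because $\nu(0),\nu(1)\leq1$), the equalizer property of the Shewhart rule, and the case condition identifying the binding post-change law; neither you nor the paper needs \eqref{eq:th6.2} there. Your observation that the per-observation mixture criterion does not minorize $\cJ_{\rm L}(T)$ is also correct, and it is exactly why case iii) cannot be reduced to a single post-change density.

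The gap is in case iii), and it is the central step. The paper's lower bound is \emph{not} obtained by adapting \cite{Moustakides1} to the statistic $\bar\ell_t=(1-q)\ell_t^1+q\ell_t^2$; it is obtained per pure measure. For each $i$ and each $t$ one starts from $\cJ_{\rm L}(T)\geq\Exp_t^i[T-t|\ccF_t,T>t]$, rewrites the right-hand side under $\Pro_\infty$ by a change of measure, multiplies by the nonnegative $\ccF_t$-measurable weight $\ind{T>t}(1-\ell_t^i)^+\geq\ind{T>t}(1-\ell_t^i)$, takes $\Exp_\infty$ and sums over $t$; the resulting double sum telescopes (using $\ell_0^i=0$), giving, for each $i$ \emph{separately},
\begin{equation*}
\cJ_{\rm L}(T)\,\Exp_\infty\Bigl[\sum_{t=0}^{T-1}(1-\ell_t^i)^+\Bigr]\;\geq\;\Exp_\infty[T].
\end{equation*}
Only then are the two inequalities mixed, with weights $1-q$ and $q$, to produce
\begin{equation*}
\cJ_{\rm L}(T)\;\geq\;\frac{\Exp_\infty[T]}{\Exp_\infty\bigl[\sum_{t=0}^{T-1}z_t\bigr]},
\qquad z_t=(1-q)(1-\ell_t^1)^+ + q(1-\ell_t^2)^+,
\end{equation*}
a ratio of $\Pro_\infty$-functionals to which the Lagrangian optimal-stopping machinery applies; its optimizer under $\Exp_\infty[T]=\gamma$ is $\mathcal{T}=\inf\{t>0:\,z_t\leq1-\nu\}$, and \eqref{eq:th6.2} is used precisely to show $\mathcal{T}=\TS(q)$ (its left-hand inequality confines the continuation region $\{z_t>1-\nu(q)\}$ to $\cA_1\cap\cA_2$, on which $z_t=1-(1-q)\ell_t^1-q\ell_t^2$). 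The crucial point your plan misses is that $z_t$ is \emph{not} a function of $\bar\ell_t$: the mixing must happen at the level of the weights $(1-\ell_t^i)^+$, after the per-measure telescoping, not at the level of the likelihood ratio. An argument run on $\bar\ell_t$ alone can only bound the mixture criterion, which---as you yourself note---does not dominate $\cJ_{\rm L}$, so your proposed adaptation cannot close the loop; and indeed you concede that the step excluding non-memoryless competitors is unproven. That step is exactly the displayed per-measure weighted telescoping bound, and without it your case iii) remains a plan rather than a proof.
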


\begin{proof}
The proof of Theorem\,\ref{th:5} can be found in the Appendix.
\end{proof}

Even though the extent of this result is clearly limited, it is nontheless the first time we have a nonasymptotic solution for Lorden's formulation when there are multiple distributions under the alternative regime. Theorem\,\ref{th:6} also establishes that 2-CUSUM is \textit{not} strictly optimum (at least not in the sense of \eqref{eq:lorden_M}) despite its very strong asymptotic optimality properties. Finally we need to mention that it is not possible to recover the same non-asymptotic result by assigning specific prior probabilities to the post-change measures (i.e. following the semi-Baysian idea of \cite{Pollak2}). The extra freedom enjoyed by considering each probability measure separately is critical in demonstrating the optimality of the Shewhart test in the sense of Lorden.

\vskip0.2cm
\textsc{Example 3:} Consider the Gaussian case where under the nominal regime the samples are i.i.d. with mean 0 and variance 1 whereas under the alternative they can have two possible means $\pm\mu,~\mu>0$ with unit variance. Let us apply case~iii) of Theorem\,\ref{th:6}. Due to symmetry it is sufficient to select $q=0.5$ to satisfy \eqref{eq:th6.1}.

The two likelihood ratios $\ell_1^i$ as functions of the observation $\xi_1$ are equal to $e^{-0.5\mu^2\pm\mu\xi_1}$ and the sets of interest are $\cA_1=\{\xi_1:\xi_1\leq0.5\mu\}$ and $\cA_2=\{\xi_1:\xi_1\geq-0.5\mu\}$. We can now compute the critical range for threshold $\nu$ from \eqref{eq:th6.2}. Since $\inf_{\xi_1<-0.5\mu}\ell_1^1=\inf_{\xi_1>0.5\mu}\ell_1^2=0$ and $\inf_{-0.5\mu\leq\xi_1\leq0.5\mu}0.5(\ell_1^1+\ell_1^2)=e^{-0.5\mu^2}$, we have $0.5\geq\nu\geq e^{-0.5\mu^2}$. This interval is nonempty when $\mu>\sqrt{2\log2}=1.1774$ and gives rise to the following range for $\gamma$
$$
1\leq\gamma\leq\frac{1}{2\Phi(-0.5\mu+\delta)},~\text{where}~\delta=-\frac{1}{\mu}\log\left(\frac{1+\sqrt{1-4e^{-\mu^2}}}{2}\right),
$$
with the worst-case average detection delay satisfying
$$
1\leq\cJ_{\rm L}(\TS)\leq\frac{1}{\Phi(0.5\mu+\delta)+\Phi(-1.5\mu+\delta)}.
$$
Using the same numerical value we adopted in Example\,1 for the one-sided case, namely, $\mu=6.1805$, we obtain $1\leq\gamma\leq500$ while the optimum detection delay becomes, at worst, 1.001. Compared to Example\,1, as we can see, the range of $\gamma$ where the Shewhart test is optimum is reduced to half.

\vskip0.2cm
\textsc{Remark:} Because with the maximizing probability approach we focus on a single sample after the change, it turns out that Shewhart is optimum for \textit{transient changes} as well. Specifically, the same proofs go through for any type of change provided \textit{it lasts at least one sample} (which is necessary for a change to exist). Clearly this is an additional distinct optimality characteristic enjoyed by this simple detection rule. As we know, the Shiryaev, CUSUM and Shiryaev-Roberts tests lose their optimality if the change does not last indefinitely after its occurrence.


\appendix
\section*{Appendix: Proofs}
\textsc{Proof of Theorem \ref{th:1}.}
We begin our analysis by writing the performance measure in a more detailed form. We have
\begin{equation}
\ccJ_{\rm S}(T)=\Pro(T=(\tau+1)^+|T>\tau)=\frac{\Pro(T=(\tau+1)^+)}{\Pro(T>\tau)}.
\end{equation}
Since $T\geq0$, for the numerator we can write
\begin{multline}
\Pro(T=(\tau+1)^+)=\Pro(\tau\leq-1)\Pro(T=0)+\sum_{t=0}^\infty\Pro(\tau=t)\Pro_t(T=t+1)\\
=\pi\varpi+(1-\pi)p\sum_{t=0}^\infty(1-p)^{t}\Pro_t(T=t+1)\\
=\pi\varpi+(1-\pi)p\sum_{t=0}^\infty(1-p)^{t}\Exp_\infty[\ell_{t+1}\ind{T=t+1}]\\
=\pi\varpi+\frac{(1-\pi)p}{(1-p)}\Exp_\infty[(1-p)^T\ell_T\ind{T>0}]\\
=\pi\varpi+\frac{(1-\pi)p}{(1-p)}\Exp_\infty[(1-p)^T\ell_T|T>0]\Pro(T>0)\\
=\pi\varpi+\frac{(1-\pi)p}{(1-p)}\Exp_\infty[(1-p)^T\ell_T|T>0](1-\varpi).
\label{eq:Sh_num}
\end{multline}
Similarly for the denominator, since $\{T>t\}\in\ccF_t$ and $T\geq0$, we have
\begin{multline}
\Pro(T>\tau)=\Pro(\tau\leq-1)\Pro(T>-1)+\sum_{t=0}^\infty\Pro(\tau=t)\Pro_\infty(T>t)\\
=\pi+(1-\pi)p\sum_{t=0}^\infty(1-p)^{t}\Pro_\infty(T>t)
=\pi+(1-\pi)p\Exp_\infty\left[\sum_{t=0}^{T-1}(1-p)^t\right]\\
=\pi+(1-\pi)\Exp_\infty[1-(1-p)^T]=\pi+(1-\pi)\Exp_\infty\left[\big(1-(1-p)^T\big)\ind{T>0}\right]\\
=\pi+(1-\pi)\Exp_\infty[1-(1-p)^T|T>0]\Pro(T>0)\\
=\pi+(1-\pi)\{1-\Exp_\infty[(1-p)^{T}|T>0]\}(1-\varpi),
\label{eq:Sh_den}
\end{multline}
with the third last equality being true because $1-(1-p)^0=0$.
Combining \eqref{eq:Sh_num} and \eqref{eq:Sh_den} we have the following form for the modified Shiryaev measure
\begin{equation}
\ccJ_{\rm S}(T)=\frac{\pi\varpi+\frac{(1-\pi)p}{1-p}\Exp_\infty[(1-p)^T\ell_T|T>0](1-\varpi)}{\pi+(1-\pi)\{1-\Exp_\infty[(1-p)^T|T>0]\}(1-\varpi)}.
\label{eq:sh_analytic}
\end{equation}

Next, we distinguish different possibilities depending on the value of $\alpha$. For case~i) where $\alpha\geq1-\pi$ by selecting $\varpi=1$, in other words stopping at 0 with probability 1, as we can see from \eqref{eq:sh_analytic}, yields $\ccJ_{\rm S}(T)=1$ which is the maximum possible value for our criterion (since it is a probability). On the other hand the denominator, which is the complement of the false alarm probability, from \eqref{eq:Sh_den} is equal to $\pi$. This means that the false alarm probability is $1-\pi$ thus satisfying the constraint.

Let now $1-\pi>\alpha$. Since the stopping time $T$ must satisfy the false alarm constraint, this suggests that the denominator, by being the complement of the false alarm probability, is no smaller than $1-\alpha$. We are going to show that in order to maximize the performance measure it is sufficient to limit ourselves to stopping times that satisfy the false alarm constraint with equality. This equality will be achieved by modifying the randomization probability $\varpi$ in a way that will improve (increase) the value of the criterion $\ccJ_{\rm S}(T)$.

As we can see from \eqref{eq:sh_analytic} both, the numerator and the denominator are linear functions of $\varpi$ and the ratio takes its maximal value (equal to 1) for $\varpi=1$. We can therefore conclude that the ratio is an increasing function of $\varpi$. If $T$ is such that the denominator is strictly greater than $1-\alpha$ and since we are in the case where $\pi<1-\alpha$, this suggests that, necessarily, we have $1-(1-\pi)\Exp_\infty[(1-p)^T|T>0]>1-\alpha$. Consequently by replacing $\varpi$ with a \textit{larger} value $\varpi'>\varpi$ we can make the denominator exactly equal to $1-\alpha$. Making the same change in the numerator, due to the monotonicity with respect to $\varpi$ this will result in an overall increase of our performance measure. Therefore, without loss of generality, we may limit ourselves to stopping times that satisfy the false alarm constraint with equality.

The previous observation suggests that we can maximize the numerator in \eqref{eq:sh_analytic} subject to the constraint that the denominator is equal to $(1-\alpha)$. Using the Lagrange multiplier technique we define the following criterion $\ccG(T)$ that combines the numerator and the constraint
\begin{align*}
\ccG(T)&=\pi\varpi+\frac{(1-\pi)p}{1-p}\Exp_\infty[(1-p)^{T}\ell_{T}|T>0](1-\varpi)\\
&~~~+\lambda\left\{\pi\varpi+\{1-(1-\pi)\Exp_\infty[(1-p)^{T}|T>0]\}(1-\varpi)\right\}\\
&=\pi(1+\lambda)\varpi\\
&~~~+\left\{\lambda+(1-\pi)\Exp_\infty\left[(1-p)^{T}\left(\frac{p}{1-p}\ell_{T}-\lambda\right)|T>0\right]\right\}(1-\varpi),
\end{align*}
with $\lambda$ being the corresponding Lagrange multiplier.
The goal, now, is first to maximize $\ccG(T)$ over $T>0$ and then over the randomization probability $\varpi\in[0,1]$. Fixing $\varpi$ and maximizing over $T>0$ means that we need to maximize the expression
$$
\hat{\ccG}(T)=\Exp_\infty\left[(1-p)^{T}\left(\frac{p}{1-p}\ell_{T}-\lambda\right)|T>0\right].
$$
For $\nu\geq0$ consider the following specific value of the Lagrange multiplier
$$
\lambda=\frac{\nu}{1-p}\{1-(1-p)\Pro_\infty(\ell_1<\nu)\}-\Pro_0(\ell_1\geq\nu).
$$
Using standard optimal stopping theory it is then straightforward to show that $\hat{\ccG}(T)$ is 
maximized by the Shewhart stopping time defined in \eqref{eq:Shewhart} with constant threshold $\nu$. The corresponding optimum performance can then be computed as follows
\begin{multline*}
\hat{\ccG}(\TS)=\sum_{t=1}^\infty(1-p)^t[\Pro_\infty(\ell_1<\nu)]^{t-1}\left\{\frac{p}{1-p}\Pro_0(\ell_1\geq\nu)-\lambda\Pro_\infty(\ell_1\geq\nu)\right\}\\
=\frac{p\Pro_0(\ell_1\geq\nu)-\lambda(1-p)\Pro_\infty(\ell_1\geq\nu)}{1-(1-p)\Pro_\infty(\ell_1<\nu)}=
\frac{p}{1-p}\nu-\lambda.
\end{multline*}
The last equality can be verified by directly substituting the definition of the Lagrange multiplier $\lambda$.
Using this result in the original measure, we end up with the following inequality
\begin{multline}
\ccG(T)\leq\pi(1+\lambda)\varpi+\left[\pi(1+\lambda)+(1-\pi)\frac{p}{1-p}\nu-\pi\right](1-\varpi).
\label{eq:Gopt}
\end{multline}
From \eqref{eq:Sh_den}, we can also compute the corresponding false alarm probability which must be set equal to $\alpha$ (we must satisfy the constraint with equality)
\begin{equation}
(1-\pi)\left\{
\varpi+\frac{(1-p)\Pro_\infty(\ell_1\geq\nu)}{1-(1-p)\Pro_\infty(\ell_1<\nu)}(1-\varpi)
\right\}=\alpha.
\label{eq:FAopt}
\end{equation}

We are now left with the definition of the randomization probability $\varpi$. Selecting $\varpi$ optimally amounts to maximizing the right hand side in \eqref{eq:Gopt} over $\varpi$. We observe that the corresponding expression is a convex combination of the value $\pi(1+\lambda)$, which is the gain obtained when stopping at 0, and $\pi(1+\lambda)+(1-\pi)\frac{p}{1-p}\nu-\pi$, which is the gain resulting by employing $\TS$ for $t>0$. Clearly we are going to put all the probability mass on the largest gain. Consequently, when $\nu>\nu^*=\frac{\pi}{1-\pi}\frac{1-p}{p}$ the gain provided by $\TS$ exceeds the gain obtained by stopping at 0, therefore in this case we select $\varpi=0$. Of course $\nu$ must be such that the Shewhart test satisfies the false alarm constraint with equality. From \eqref{eq:FAopt} by substituting $\varpi=0$ we can see that the constraint is satisfied when $\nu$ is computed through equation \eqref{eq:th1.iii}. This equation has always a solution that exceeds $\nu^*$ as long as $\alpha$ takes values in the interval specified in case iii). When $\nu=\nu^*$, stopping at 0 provides exactly the same gain as the Shewhart test $\TS$ with threshold $\nu^*$. Therefore we can randomize between the two possibilities with any probability $\varpi$. However, since we need to satisfy the false alarm constraint with equality, from \eqref{eq:FAopt} with $\nu=\nu^*$ we can solve for $\varpi$, and obtain the optimum $\varpi$ depicted in case~ii). The resulting value corresponds to a legitimate probability $\varpi\in[0,1]$ when $\alpha$ is within the limits prescribed for this case. This concludes our proof.\qed

\vskip0.2cm
\textsc{Proof of Theorem \ref{th:2}.} The proof will rely on the analysis we applied in the proof of Theorem\,\ref{th:1}. In order to solve the max-min problem defined in our theorem it is sufficient to show the existence of a tripple $(T^*,\pi^*,p^*)$ such that the following saddle-point relation holds
\begin{equation}
\ccJ_{\rm S}(T,\pi^*,p^*)\leq \ccJ_{\rm S}(T^*,\pi^*,p^*)\leq
\ccJ_{\rm S}(T^*,\pi,p),
\label{eq:AA.1}
\end{equation}
for all stopping times $T$ that satisfy the false alarm constraint. It is well known that whenever a saddle-point solution exists it is also max-min optimum. Indeed note that if $T^*$ satisfies \eqref{eq:AA.1} then we can write
$$
\inf_{\pi,p}\ccJ_{\rm S}(T,\pi,p)\leq\ccJ_{\rm S}(T,\pi^*,p^*)\leq\ccJ_{\rm S}(T^*,\pi^*,p^*)=\inf_{\pi,p}\ccJ_{\rm S}(T^*,\pi,p),
$$
where the first inequality is obvious; the second corresponds to the left hand side inequality in \eqref{eq:AA.1} and the last equality is equivalent to the right hand side inequality in \eqref{eq:AA.1}. Consequently $T^*$ solves the max-min problem and the parameter pair $(\pi^*,p^*)$ corresponds to the worst-case (least-favorable) exponential prior.

To show \eqref{eq:AA.1}, let us first define our candidate optimum stopping time $T^*$. Consider \eqref{eq:th2.1}, and observe that for $\nu\to0$ the left hand side tends to 0, whereas for $\nu\to\infty$ the same expression tends to $\infty$. Furthermore the ratio is a strictly increasing and continuous function of $\nu$ (see Footnote \ref{foot:3}). Due to this continuity and strict monotonicity the equation has a unique solution $\nu$. With the help of this threshold value our candidate detection rule $T^*$ consists in randomizing with probability $\varpi^*=\Pro_0(\ell_1\geq\nu)$ between stopping at 0 and using the Shewhart test $\TS$ with constant threshold $\nu$. For $T^*$ we observe that $\Exp_\infty[T^*]=(1-\varpi^*)\Exp_\infty[\TS]=\Pro_0(\ell_1<\nu)/\Pro_\infty(\ell_1\geq\nu)=\gamma$, suggesting that it satisfies the false alarm constraint with equality. 

We first demonstrate that $T^*$ satisfies the right hand side in \eqref{eq:AA.1}. For any parameter pair $(\pi,p)$, after recalling that on $\{T^*>0\}$ we have $T^*=\TS$, we can verify using \eqref{eq:sh_analytic}, that
$$
\ccJ_{\rm S}(T^*,\pi,p)=\frac{\pi\varpi^*+\frac{(1-\pi)p}{1-p}\Exp_\infty[(1-p)^{\TS}\ell_{\TS}](1-\varpi^*)}{\pi\varpi^*+\{1-(1-\pi)\Exp_\infty[(1-p)^{\TS}]\}(1-\varpi^*)}=\Pro_0(\ell_1\geq\nu).
$$
The last equality is true since we can immediately compute $\Exp_\infty[(1-p)^{{\TS}}\ell_{{\TS}}]=(1-p)\Pro_0(\ell_1\geq\nu)/\{1-(1-p)\Pro_\infty(\ell_1<\nu)\}$ and $\Exp_\infty[(1-p)^{{\TS}}]=(1-p)\Pro_\infty(\ell_1\geq\nu)/\{1-(1-p)\Pro_\infty(\ell_1<\nu)\}$.
As we realize, the resulting performance of $T^*$ is independent from $(\pi,p)$ therefore the stopping rule is an equalizer with respect to the two parameters. This, in turn, suggests that the right hand side in \eqref{eq:AA.1} is trivially satisfied with equality.

Showing the left hand side inequality requires more work. Note that we need to define the worst-case parameter pair $(\pi^*,p^*)$. Unfortunately this pair turns out to be a limiting case corresponding to an exponential prior that tends to a degenerate uniform. More specifically, for $p>0$ we solve for $\pi$ the following equation
$$
\frac{\pi}{1-\pi}\frac{1-p}{p}=\nu,
$$
resulting in $\pi(p)=\nu p/(1-p+\nu p)$. The parameter pair $(\pi(p),p)$ with $p\to0$ yields the worst-case exponential prior we are interested in. Consequently for the left hand side inequality we need to prove that
\begin{equation}
\lim_{p\to0}\ccJ_{\rm S}(T,\pi(p),p)\leq\lim_{p\to0}\ccJ_{\rm S}(T^*,\pi(p),p)=\Pro_0(\ell_1\geq\nu),
\label{eq:left_in}
\end{equation}
over all $T$ satisfying the false alarm constraint $\Exp_\infty[T]=(1-\varpi)\Exp_\infty[T|T>0]\geq\gamma$. 

Fix a sufficiently small $\epsilon>0$ so that $\varpi^*_\epsilon=\Pro_0(\ell_1\geq\nu)+\epsilon<1$. Define the false alarm level 
$$
\alpha_\epsilon(p)=\big(1-\pi(p)\big)\frac{p\varpi^*_\epsilon+(1-p)\Pro_\infty(\ell_1\geq\nu)}{p+(1-p)\Pro_\infty(\ell_1\geq\nu)}
$$
and the class of stopping rules
\begin{equation}
\ccA_\epsilon(p)=\{T:~\Pro(T\leq\tau)\leq\alpha_\epsilon(p)\}.
\label{eq:classA}
\end{equation}
It is then straightforward to verify that $\alpha_\epsilon(p)\in(0,1)$ and that for any probability $p$, the quantities $\pi(p),p,\alpha_\epsilon(p)$ are such that case~ii) of Theorem\,\ref{th:1} applies. This suggests that when $\ccJ_{\rm S}(T,\pi(p),p)$ is maximized over the class $\ccA_\epsilon(p)$, the optimum stopping time is to randomize between stopping at 0 and the Shewhart test $\TS$ with threshold $\nu$ using the randomization probability $\varpi^*_\epsilon$. The latter is a direct consequence of the specific definition of $\pi(p)$ and $\alpha_\epsilon(p)$. Call the resulting optimal stopping time $T^*_\epsilon$. Note also that this optimality property is true for all $1>p>0$. Using the definitions of $\pi(p)$, $\alpha_\epsilon(p)$ and \eqref{eq:Sh_den}, we can verify that the class $\ccA_\epsilon(p)$ in \eqref{eq:classA} can be equivalently written as
\begin{equation}
\ccA_\epsilon(p)=\left\{T:~(1-\varpi)\Exp_\infty\left[\frac{1-(1-p)^{T}}{p}|T>0\right]\geq\frac{\gamma-\frac{\epsilon}{\Pro_\infty(\ell_1\geq\nu)}}{1+p\gamma}\right\},
\label{eq:classA2}
\end{equation}
where we also used \eqref{eq:th2.1}.

Fix a $T$ that satisfies the false alarm constraint $\Exp_\infty[T]=(1-\varpi)\Exp_\infty[T|T>0]\geq\gamma$. As we argued before, our goal is to prove \eqref{eq:left_in}. From monotone convergence we have
$$
\lim_{p\to0}(1-\varpi)\Exp_\infty\left[\frac{1-(1-p)^{T}}{p}|T>0\right]=(1-\varpi)\Exp_\infty[T|T>0]\geq\gamma.
$$
Consequently, for any $p\in(0, p_\epsilon]$, where $p_\epsilon$ sufficiently small, we can write
$$
(1-\varpi)\Exp_\infty\left[\frac{1-(1-p)^{T}}{p}|T>0\right]\geq\gamma-\frac{\epsilon}{\Pro_\infty(\ell_1\geq\nu)}.
$$
The previous inequality, comparing with \eqref{eq:classA2}, suggests that $T\in\ccA_\epsilon(p)$ for all $0<p\leq p_\epsilon$. A direct consequence of this fact is that $\ccJ_{\rm S}(T,\pi(p),p)\leq \ccJ_{\rm S}(T^*_\epsilon,\pi(p),p)$ for all $0<p\leq p_\epsilon$. Taking the limit as $p\to0$ and using monotone convergence, we obtain
\begin{multline}
\lim_{p\to0}\ccJ_{\rm S}(T,\pi(p),p)\leq
\lim_{p\to0}\ccJ_{\rm S}(T^*_\epsilon,\pi(p),p)\\
=\frac{\varpi^*_\epsilon\nu+(1-\varpi^*_\epsilon)\Exp_\infty[\ell_{\TS}]}{\nu+(1-\varpi^*_\epsilon)\Exp_\infty[\TS]},
\label{eq:B.4}
\end{multline}
Since $\Exp_\infty[\TS]=1/\Pro_\infty(\ell_1\geq\nu)$, $\Exp_\infty[\ell_{\TS}]=\Pro_0(\ell_1\geq\nu)/\Pro_\infty(\ell_1\geq\nu)$ and $\varpi^*_\epsilon=\Pro_0(\ell_1\geq\nu)+\epsilon$, we conclude that $\Exp_\infty[\ell_{\TS}]\leq\varpi^*_\epsilon\Exp_\infty[\TS]$. Substituting in \eqref{eq:B.4} yields
$$
\lim_{p\to0}\ccJ_{\rm S}(T,\pi(p),p)\leq\varpi^*_\epsilon=\Pro_0(\ell_1\geq\nu)+\epsilon.
$$
Because this inequality is true for any sufficiently small $\epsilon>0$, we have validity of \eqref{eq:left_in}. This concludes our proof.\qed

\vskip0.2cm
\textsc{Proof of Theorem \ref{th:3}.}
If $T$ is such that $\Exp_\infty[T]=\infty$, then we can define a sufficiently large integer $M$ so that $\infty>\Exp_\infty[T_M]\geq\gamma$ where $T_M=\min\{T,M\}$. Since for $t<M$ we have $\{T=t+1\}=\{T_M=t+1\}$ and $\{T>t\}=\{T_M>t\}$, we conclude
$\Pro_t(T_M=t+1|\ccF_t,T_M>t)=\Pro_t(T=t+1|\ccF_t,T>t)$. On the other hand for $t\geq M$ it is true that $\{T_M=t+1\}=\{T_M>t\}=\varnothing$, suggesting that $\Pro_t(T_M=t+1|\ccF_t,T_M>t)=1\geq\Pro_t(T=t+1|\ccF_t,T>t)$. This means that $\ccJ_{\rm L}(T)\leq\ccJ_{\rm L}(T_M)$. The last inequality implies that we can limit ourselves to stopping times $T$ that satisfy $\infty>\Exp_\infty[T]\geq\gamma$.

From Lorden's modified measure \eqref{eq:Lorden2} we conclude that for all $t\geq0$ we can write
$$
\ccJ_{\rm L}(T)\leq\Pro_t(T=t+1|\ccF_t,T>t).
$$
Multiplying both sides with $\ind{T>t}$ and taking expectation with respect to the nominal measure yields
\begin{equation}
\ccJ_{\rm L}(T)\Pro_\infty(T>t)\leq\Pro_t(T=t+1)=\Exp_\infty[\ell_{t+1}\ind{T=t+1}].
\label{eq:C.0}
\end{equation}
Summing over all $t\geq0$ we obtain
$$
\ccJ_{\rm L}(T)\Exp_\infty[T]\leq\Exp_\infty[\ell_T],
$$
where we define $\ell_0=0$. From the previous inequality we conclude
$$
\ccJ_{\rm L}(T)\leq\frac{\Exp_\infty[\ell_T]}{\Exp_\infty[T]}=\frac{(1-\varpi)\Exp_\infty[\ell_T|T>0]}{(1-\varpi)\Exp_\infty[T|T>0]}=\frac{\Exp_\infty[\ell_T|T>0]}{\Exp_\infty[T|T>0]}.
$$

Let us examine the ratio $\Exp_\infty[\ell_T]/\Exp_\infty[T]$ over all $T$ that satisfy the constraint. Note that when $\infty>\Exp_\infty[T]=(1-\varpi)\Exp_\infty[T|T>0]>\gamma$ we can replace $\varpi$ with a larger value $\varpi'$ so that $(1-\varpi')\Exp_\infty[T|T>0]=\gamma$ without changing the value of the ratio (since it does not depend on $\varpi$). This in turn suggests that any value attained by this ratio can also be achieved by a stopping time that satisfies the constraint with equality. Using this observation we can write
\begin{equation}
\sup_{T:\Exp_\infty[T]\geq\gamma}\ccJ_{\rm L}(T)\leq \sup_{T:\Exp_\infty[T]=\gamma}\frac{\Exp_\infty[\ell_T]}{\Exp_\infty[T]}=\gamma^{-1}\sup_{T:\Exp_\infty[T]=\gamma}\Exp_\infty[\ell_T].
\label{eq:C.1}
\end{equation}

To maximize $\Exp_\infty[\ell_T]$ over all stopping times that satisfy the constraint with equality, we reduce the optimization problem into an unconstraint one using the Lagrange multiplier technique. In particular we  consider the maximization of
$$
\ccG(T)=\Exp_\infty[\ell_T-\lambda T]=(1-\varpi)\Exp_\infty[\ell_T-\lambda T|T>0].
$$
To find the optimum stopping time we will first optimize over $T>0$ and then identify the optimum randomization probability $\varpi$. Let $\nu\geq0$ be the solution of the equation $\Pro_\infty(\ell_1\geq\nu)=1/\gamma$. Define $\lambda=\Pro_0(\ell_1\geq\nu)-\nu\Pro_\infty(\ell_1\geq\nu)$. Using standard optimal stopping theory we can then conclude that $\ccG(T)$ for $T>0$ is optimized by the Shewhart test with threshold $\nu$. Since
$\Exp_\infty[\TS]=1/\Pro_\infty(\ell_1\geq\nu)=\gamma$ and $\Exp_\infty[\ell_{\TS}]=\Pro_0(\ell_1\geq\nu)/\Pro_\infty(\ell_1\geq\nu)=\gamma\Pro_0(\ell_1\geq\nu)$, if we also use the definition of $\lambda$ we conclude that
$$
\ccG(T)=(1-\varpi)\Exp_\infty[\ell_T-\lambda T|T>0]\leq(1-\varpi)\Exp_\infty[\ell_{\TS}-\lambda \TS]=(1-\varpi)\nu\leq\nu.
$$
The last inequality suggests that the optimum randomization probability is $\varpi=0$.
From the previous result we have that for any $T$ satisfying the false alarm constraint with equality, we can write
$$
\Exp_\infty[\ell_T]-\lambda \gamma=\Exp_\infty[\ell_T-\lambda T]\leq\Exp_\infty[\ell_{\TS}-\lambda \TS]=
\Exp_\infty[\ell_{\TS}]-\lambda\gamma,
$$
which implies $\Exp_\infty[\ell_T]\leq\Exp_\infty[\ell_{\TS}]=\gamma\Pro_0(\ell_1\geq\nu)$.
Observing also that for every $t\geq0$ we have $\Pro_t(\TS=t+1|\ccF_t,\TS>t)=\Pro_0(\ell_1\geq\nu)$, this means that Shewhart is an equalizer, consequently $\ccJ_{\rm L}(\TS)=\Pro_0(\ell_1\geq\nu)$. Using these two facts in \eqref{eq:C.1} leads to
\begin{multline*}
\ccJ_{\rm L}(\TS)\leq\sup_{T:\Exp_\infty[T]\geq\gamma}\ccJ_{\rm L}(T)\leq\gamma^{-1}\sup_{T:\Exp_\infty[T]=\gamma}\Exp_\infty[\ell_T]\\
\leq\gamma^{-1}\{\gamma\Pro_0(\ell_1\geq\nu)\}=\Pro_0(\ell_1\geq\nu)=\ccJ_{\rm L}(\TS),
\end{multline*}
which proves optimality for $\TS$ and concludes the proof. Exactly the same analysis applies to \eqref{eq:Pollak2}. In fact, we can simply start the proof from \eqref{eq:C.0}, which is immediately satisfied by Pollak's modified measure.\qed

\vskip0.2cm
\textsc{Proof of Theorem \ref{th:4}.} Let $\beta$ and $\{\nu_t(\beta)\}$ be such that \eqref{eq:th4.1},\eqref{eq:th4.1.2},\eqref{eq:th4.2} are satisfied.  If we define $\rho(\beta)=\sup_{t>0}\Pro_{\infty,t}\big(\ell_t<\nu_t(\beta)\big)$, then assumption \eqref{eq:th4.1.2} is equivalent to
\begin{equation}
0\leq\rho(\beta)<1.
\label{eq:D.00}
\end{equation}
For simplicity, from now on, we drop the dependence of $\nu_t(\beta)$ and $\rho(\beta)$ on $\beta$.
For $t\geq0$ define the two sequences $\{\omega_t\},\{c_t\}$
$$
\omega_t=\Exp_\infty[\TS-t|\TS>t]=1+\sum_{n=t+1}^\infty\prod_{l=t+1}^n\Pro_{\infty,l}(\ell_{l}<\nu_{l})~\text{and}~c_{t}=\frac{\omega_{t+1}}{\nu_{t+1}}.
$$
Also set $c_{-1}=0$ and $\ell_0=0$.
From the definition of $\omega_t$ and comparing with \eqref{eq:th4.2} we conclude that $\omega_0=\gamma$.
Note that $\{\omega_t\}$ satisfies the backward recursion
\begin{equation}
\omega_{t-1}=1+\Pro_{\infty,t}(\ell_{t}<\nu_{t})\omega_{t}.
\label{eq:D.01}
\end{equation}
From \eqref{eq:D.00} we have $\Pro_{\infty,t}(\ell_t<\nu_t)\leq\rho$ suggesting that $\omega_t\leq1/(1-\rho)$. Furthermore
$$
1-\beta=\Pro_{0,t}(\ell_t<\nu_t)=\Exp_\infty[\ell_t\ind{\ell_t<\nu_t}]\leq\nu_t,
$$
from which we conclude that $c_t\leq1/(1-\beta)(1-\rho)$. In other words both sequences $\{\omega_t\},\{c_t\}$ are uniformly bounded from above by some finite constant.

Consider first \eqref{eq:th4.2}. The function $\phi(\beta)=1+\sum_{t=1}^\infty\prod_{l=1}^t\Pro_{\infty,l}\big(\ell_l<\nu_l(\beta)\big)$ is decreasing in $\beta$ with $\phi(0)=\infty$ and $\phi(1)=1$.
From assumption \eqref{eq:th4.1.2} we have validity of \eqref{eq:D.00} which allows for the use of Bounded Convergence to show that $\phi(\beta)$ is continuous in $\beta$. This suggests that \eqref{eq:th4.2} has a nonnegative solution.

As in the previous theorem we can write
$$
\ccJ_{\rm L}(T)\Pro_\infty(T>t)\leq\Exp_\infty[\ell_{t+1}\ind{T=t+1}].
$$
Multiplying both sides with $c_{t}$, which is nonnegative, and summing over $t\geq0$ we deduce that for any $T>0$ we have
$$
\ccJ_{\rm L}(T)\leq\frac{\Exp_\infty[\ell_Tc_{T-1}]}{\Exp_\infty[\sum_{t=0}^{T-1}c_t]}.
$$
Enlarging the class of stopping times $T$ by allowing randomization at time 0 with probability $\varpi$, recalling that $\ell_0=0$ and
using similar arguments as in the proof of Theorem\,\ref{th:3}, we can show that
\begin{equation}
\sup_{T:\Exp_\infty[T]\geq\gamma}\ccJ_{\rm L}(T)\leq\sup_{T:\Exp_\infty[T]=\gamma}\frac{\Exp_\infty[\ell_Tc_{T-1}]}{\Exp_\infty[\sum_{t=0}^{T-1}c_t]},
\label{eq:D.0}
\end{equation}
namely, to maximize the upper bound it suffices to limit ourselves to stopping times that satisfy the false alarm constraint with equality. We will show that the upper bound cannot exceed $\beta$.

Fix $T$ with $\Exp_\infty[T]=(1-\varpi)\Exp_\infty[T|T>0]=\gamma$ and consider the expression
\begin{multline}
\ccG(T)=\Exp_\infty\left[\ell_Tc_{T-1}-\beta\sum_{t=0}^{T-1}c_t+T\right]\\
=(1-\varpi)\Exp_\infty\left[\ell_Tc_{T-1}+\sum_{t=0}^{T-1}(1-\beta c_t)|T>0\right].
\label{eq:D.1}
\end{multline}
Note that $\varpi=1$ is not an acceptable value since then $T$ cannot satisfy the false alarm constraint with equality. Therefore $0\leq\varpi<1$. This suggests that $\Exp_\infty[T|T>0]=\gamma/(1-\varpi)<\infty$. We first examine the part $T>0$, namely the expression
$$
\hat{\ccG}(T)=\Exp_\infty\left[\ell_Tc_{T-1}+\sum_{t=0}^{T-1}(1-\beta c_t)|T>0\right].
$$
We observe that
\begin{multline*}
\Exp_\infty[\ell_T|T>0]=\sum_{t=1}^\infty\Exp_\infty[\ell_t\ind{T=t}|T>0]\leq\sum_{t=1}^\infty\Exp_\infty[\ell_t\ind{T>t-1}|T>0]\\
=\sum_{t=1}^\infty\Exp_\infty\left[\Exp_\infty[\ell_t|\ccF_{t-1}]\ind{T>t-1}|T>0\right]=\Exp_\infty[T|T>0]<\infty.
\end{multline*}
Since $\{c_t\}$ is uniformly bounded and because of the previous observation, this suggests that for every $\epsilon>0$ we can find sufficiently large integer $M$ so that $|\hat{\ccG}(T)-\hat{\ccG}(T_M)|\leq\epsilon$, where $T_M=\min\{T,M\}$. This implies
\begin{equation}
\hat{\ccG}(T)\leq\hat{\ccG}(T_M)+\epsilon.
\label{eq:D.2}
\end{equation}
We can now maximize $\hat{\ccG}(T_M)$ over $T_M$ with the optimization performed over the finite time horizon $[0,M]$. From standard optimal stopping theory we can define the sequence of optimal costs with the help of the backward recursion
$$
V_t(\ell_t)=\max\{\ell_tc_{t-1},(1-\beta c_{t})+\Exp_\infty[V_{t+1}(\ell_{t+1})]\};~t=M-1,\ldots,0,
$$
starting with $V_M(\ell_M)=\ell_Mc_{M-1}$. Since $V_M(\ell_M)\leq\max\{\ell_Mc_{M-1},\omega_M\}$, using induction we can show that $V_t(\ell_t)\leq\max\{\ell_tc_{t-1},\omega_t\}$ for all $t=M,M-1,\ldots,0$. Indeed, the inequality is true for $t=M$. Assume it is true for $t+1<M$, we will then prove it for $t$. Note that
\begin{multline*}
V_t(\ell_t)=\max\{\ell_tc_{t-1},(1-\beta c_{t})+\Exp_\infty[V_{t+1}(\ell_{t+1})]\}\\
\leq\max\{\ell_tc_{t-1},(1-\beta c_{t})+\Exp_\infty[\max\{\ell_{t+1}c_t,\omega_{t+1}\}]\}\\
=\max\{\ell_tc_{t-1},(1-\beta c_{t})+c_t\Pro_{0,t+1}(\ell_{t+1}\geq\nu_{t+1})+\omega_{t+1}\Pro_{\infty,t+1}(\ell_{t+1}<\nu_{t+1})\}\\
=\max\{\ell_tc_{t-1},1+\omega_{t+1}\Pro_{\infty,t+1}(\ell_{t+1}<\nu_{t+1})\}=\max\{\ell_tc_{t-1},\omega_{t}\}.
\end{multline*}
The inequality above is due to the induction assumption; furthermore, in the last three equalities we used the definition of $c_t$, namely, $c_{t}=\omega_{t+1}/\nu_{t+1}$; the fact that by construction of the sequence $\{\nu_t\}$ we have $\Pro_{0,t+1}(\ell_{t+1}\geq\nu_{t+1})=\beta$; and we also used recursion \eqref{eq:D.01}. We thus conclude that $V_t(\ell_t)\leq\max\{\ell_tc_{t-1},\omega_t\}$. Applying it for $t=0$ yields $V_0(\ell_0)\leq\max\{\ell_0c_{-1},\omega_0\}=\omega_0=\gamma$, because $\ell_0$ is defined to be 0 and, as we argued, $\omega_0=\gamma$. From optimal stopping theory we have $\hat{\ccG}(T_M)\leq V_0(\ell_0)$, consequently $\hat{\ccG}(T_M)\leq\gamma$. Using this in \eqref{eq:D.2} we obtain
$$
\hat{\ccG}(T)\leq\hat{\ccG}(T_M)+\epsilon\leq\gamma+\epsilon,
$$
which implies $\hat{\ccG}(T)\leq\gamma$. Substituting in \eqref{eq:D.1} and maximizing over $\varpi$, we have
$$
\ccG(T)\leq(1-\varpi)\gamma\leq\gamma,
$$
with the optimum randomization being $\varpi=0$. Using the definition of $\ccG(T)$ from \eqref{eq:D.1} and the fact that we consider $T$ with $\Exp_\infty[T]=\gamma$ we have
$$
\gamma\geq\ccG(T)=\Exp_\infty[\ell_Tc_{T-1}]-\beta\Exp_\infty\left[\sum_{t=0}^{T-1}c_t\right]+\Exp_\infty[T]$$
which directly implies
$$
\frac{\Exp_\infty[\ell_Tc_{T-1}]}{\Exp_\infty\left[\sum_{t=0}^{T-1}c_t\right]}\leq\beta.
$$
Shewhart, by construction, is an equalizer, hence we have $\ccJ_{\rm L}(\TS)=\beta$. From \eqref{eq:D.0} and the previous inequality we can then write
$$
\ccJ_{\rm L}(\TS)\leq\sup_{T:\Exp_\infty[T]\geq\gamma}\ccJ_{\rm L}(T)\leq\sup_{T:\Exp_\infty[T]=\gamma}
\frac{\Exp_\infty[\ell_Tc_{T-1}]}{\Exp_\infty\left[\sum_{t=0}^{T-1}c_t\right]}\leq\beta=\ccJ_{\rm L}(\TS),
$$
thus proving the desired optimality for Lorden's criterion. Similar proof applies in the case of Pollak's measure.\qed

\vskip0.2cm
\textsc{Proof of Theorem \ref{th:5}.}
When $q=0$ or $1$ then $\Pro_\infty(\ell_1^i\geq\nu)$ is continuous and strictly decreasing in $\nu$ (see Footnote\,\ref{foot:3}). If $q\in(0,1)$ we observe
\begin{equation}
\Pro_\infty\big((1-q)\ell_1^1+q\ell_1^2\geq\nu\big)=\int_0^\infty\Pro_\infty\left(\ell_1^1\geq\frac{\nu-qs}{1-q}\right)\Pro_\infty(\ell_1^2\in ds).
\label{eq:th5.10}
\end{equation}
Consequently if we use the continuity and strict monotonicity with respect to $\nu$ of the first probability under the integral and Bounded Convergence we can prove continuity and strict monotonicity of $\Pro_\infty((1-q)\ell_1^1+q\ell_1^2\geq\nu)$ as a function of $\nu$ for all $q\in[0,1]$. This probability is equal to 1 and 0 for $\nu=0$ and $\nu\to\infty$ respectively therefore there exists unique $\nu(q)\geq0$ that satisfies the false alarm constraint \eqref{eq:nuq} with equality.

Consider now $\nu(q)$ as a function of $q$. We like to show that this function is continuous. Fix $q_0\in(0,1)$ then for $q\to q_0\pm$ we will show $\nu(q_0\pm)=\nu(q_0)$. Recall that $\nu(q)$ is constructed so that for all $q\in[0,1]$ we have $\Pro_\infty((1-q)\ell_1^1+q\ell_1^2\geq\nu(q))=1/\gamma$.
Taking the limit with respect to $q\to q_0\pm$ and using \eqref{eq:th5.10} we have
\begin{multline*}
\frac{1}{\gamma}
=\lim_{q\to q_0\pm}\int_0^\infty\Pro_\infty\left(\ell_1^1\geq\frac{\nu(q)-qs}{1-q}\right)\Pro_\infty(\ell_1^2\in ds)\\
=\int_0^\infty\Pro_\infty\left(\ell_1^1\geq\frac{\nu(q_0\pm)-q_0s}{1-q_0}\right)\Pro_\infty(\ell_1^2\in ds)\\
=\Pro_\infty\big((1-q_0)\ell_1^1+q_0\ell_1^2\geq\nu(q_0\pm)\big),
\end{multline*}
where for the second equality we used Bounded Convergence and the continuity of the cdf of $\ell_1^1$. Since
$\Pro_\infty((1-q_0)\ell_1^1+q_0\ell_1^2\geq\nu(q_0\pm))=1/\gamma$ but also from the definition of $\nu(q_0)$ that $\Pro_\infty((1-q_0)\ell_1^1+q_0\ell_1^2\geq\nu(q_0))=1/\gamma$, we can claim that $\nu(q_0\pm)=\nu(q_0)$ because for each $q$, as we argued before, the threshold that satisfies the false alarm constraint with equality is unique. Similar proof (with one-sided limits) applies for $q_0=0,1$.

Let us now prove the validity of our theorem when the condition of case~i) is true. We have
\begin{multline}
\ccJ_{\rm L}(T)=\min_{i=1,2}\inf_{t\geq0}\esinf\Pro_t^i(T=t+1|\ccF_t,T>t)\\
\leq\inf_{t\geq0}\esinf\Pro_t^1(T=t+1|\ccF_t,T>t)\\
\leq\inf_{t\geq0}\esinf\Pro_t^1(\TS(0)=t+1|\ccF_t,\TS(0)>t)=\Pro_0^1\big(\ell_1\geq\nu(0)\big),
\label{eq:E1}
\end{multline}
where the second inequality comes from the fact that $\TS(0)$ is optimum when the post-change probability measure is $\Pro_0^1$, and the last equality is the result of $\TS(0)$ being an equalizer under $\Pro_0^1$. Note now that
$$
\Pro_0^1\big(\ell_1\geq\nu(0)\big)\leq\Pro_0^2\big(\ell_1\geq\nu(0)\big)=\inf_{t\geq0}\esinf\Pro_t^2\big(\TS(0)=t+1|\ccF_t,\TS(0)>t\big),
$$
the inequality being the condition of case~i) and the equality that follows is the result of $\TS(0)$ being an equalizer under $\Pro_0^2$ as well. Completing what was started in \eqref{eq:E1}, we can write
\begin{multline*}
\ccJ_{\rm L}(T)\leq\Pro_0^1\big(\ell_1\geq\nu(0)\big)=\min_{i=1,2}\Pro_0^i\big(\ell_1\geq\nu(0)\big)\\
=\min_{i=1,2}\inf_{t\geq0}\esinf\Pro_t^i\big(\TS(0)|\ccF_t,\TS(0)>t\big)=\ccJ_{\rm L}\big(\TS(0)\big),
\end{multline*}
which proves the claim of case~i). Similar proof applies in case~ii).

Suppose now that neither the condition of case~i) nor of case~ii) is valid. This suggests that we simultaneously have $\Pro_0^2(\ell_1^1\geq\nu(0))<\Pro_0^1(\ell_1^1\geq\nu(0))$ and $\Pro_0^1(\ell_1^2\geq\nu(1))<\Pro_0^2(\ell_1^2\geq\nu(1))$. Define the following difference as a function of $q$
$$
{\sf D}(q)=\Pro_0^1\big((1-q)\ell_1^1+q\ell_1^2\geq\nu(q)\big)-\Pro_0^2\big((1-q)\ell_1^1+q\ell_1^2\geq\nu(q)\big).
$$
We observe that ${\sf D}(0)>0$ and ${\sf D}(1)<0$, furthermore ${\sf D}(q)$ is continuous because we can show using \eqref{eq:th5.10} and the continuity of $\nu(q)$ that the probabilities $\Pro_0^i((1-q)\ell_1^1+q\ell_1^2\geq\nu(q))$ are continuous in $q$. Hence there exists $q\in(0,1)$ so that ${\sf D}(q)=0$.
For this specific $q$ the corresponding Shewhart stopping rule $\TS(q)$ is by construction an equalizer across time and across post-change probabilities. Furthermore for each $T$ and $t\geq0$, as in \eqref{eq:C.0}, we have
$$
\ccJ_{\rm L}(T)\Pro_\infty(T>t)\leq\Pro_t^i(T=t+1);~~i=1,2
$$
suggesting
\begin{multline*}
\ccJ_{\rm L}(T)\Pro_\infty(T>t)\leq(1-q)\Pro_t^1(T=t+1)+q\Pro_t^2(T=t+1)\\
=\Exp_\infty[\{(1-q)\ell_{t+1}^1+q\ell_{t+1}^2\}\ind{T=t+1}].
\end{multline*}
Summing over $t\geq0$ we obtain the following upper bound
$$
\ccJ_{\rm L}(T)\leq\frac{\Exp_\infty[(1-q)\ell_T^1+q\ell_T^2]}{\Exp_\infty[T]}.
$$
The proof continues along the same lines of the proof of Theorem\,\ref{th:3}. Basically we show that the upper bound is optimized by $\TS(q)$, furthermore this optimal value is also attained by $\ccJ_{\rm L}(\TS(q))$ because $\TS(q)$ is an equalizer across time and across post-change probabilities. This establishes the desired optimality for $\TS(q)$. 

What is now left to demonstrate is that for each $\gamma$, only one of the three cases can be valid. Call
$$
\ccJ_{\rm L}^i(T)=\inf_{t\geq0}\esinf\Pro_t^i(T=t+1|\ccF_t,T>t);~~i=1,2,
$$
then we know that $\ccJ_{\rm L}^1(T)$ is maximized by $\TS(0)$ and $\ccJ_{\rm L}^2(T)$ by $\TS(1)$. In fact no other stopping time can attain the same optimal value unless it is equal, with probability 1, to the corresponding Shewhart test. If case~i) applies then we will show that it is not possible the condition of case~ii) to be true. Indeed, if both conditions were valid simultaneously, then we could write
\begin{multline*}
\ccJ_{\rm L}^1\big(\TS(0)\big)=\Pro_0^1\big(\ell_1^1\geq\nu(0)\big)\leq\Pro_0^2\big(\ell_1^1\geq\nu(0)\big)=\ccJ_{\rm L}^2\big(\TS(0)\big)\\
\leq\ccJ_{\rm L}^2\big(\TS(1)\big)=\Pro_0^2\big(\ell_2^1\geq\nu(1)\big)\leq\Pro_0^1\big(\ell_2^1\geq\nu(1)\big)=\ccJ_{\rm L}^1\big(\TS(1)\big),
\end{multline*}
where the first inequality comes from case~i), the second inequality from the fact that $\TS(1)$ optimizes $\ccJ_{\rm L}^2(T)$ and the third inequality is the condition of case~ii). From the above we conclude that $\TS(1)$ has a better $\ccJ_{\rm L}^1(\cdot)$ performance than $\TS(0)$ which optimizes $\ccJ_{\rm L}^1(\cdot)$, leading to contradiction. Actually since $\TS(1)$ is not equal to $\TS(0)$ with probability 1, its corresponding performance is strictly smaller than the optimum. 

Similarly it is not possible to have the conditions of case~i) and case~iii) be satisfied at the same time. Again if this were true then
\begin{multline*}
\ccJ_{\rm L}^1\big(\TS(0)\big)=\Pro_0^1\big(\ell_1^1\geq\nu(0)\big)\leq(1-q)\Pro_0^1\big(\ell_1^1\geq\nu(0)\big)+q\Pro_0^2\big(\ell_1^1\geq\nu(0)\big)\\
=(1-q)\ccJ_{\rm L}^1\big(\TS(0)\big)+q\ccJ_{\rm L}^2\big(\TS(0)\big)\\
\leq(1-q)\ccJ_{\rm L}^1\big(\TS(q)\big)+q\ccJ_{\rm L}^2\big(\TS(q)\big)=\ccJ_{\rm L}^1\big(\TS(q)\big),
\end{multline*}
with the first inequality due to case~i) and the second due to the fact that the convex combination of the two measures is maximized by $\TS(q)$. Finally the last equality is true because of case~iii) namely that the stopping time $\TS(q)$ is an equalizer for the two post-change measures. Again this is a contradiction since $\TS(q)$ has larger $\ccJ_{\rm L}^1(\cdot)$ measure than $\TS(0)$ which is the optimum. Therefore case~i) and case~iii) cannot be valid at the same time. Similarly we can show that case~ii) and case~iii) are incompatible.

Since we have shown that when neither case~i) nor case~ii) is valid, we necessarily have case~iii) being true, this suggests that, for each value of $\gamma$, exactly one of the three cases applies. This concludes the proof for Lorden's criterion. Similar proof applies in the case of Pollak's measure. \qed

\vskip0.2cm
\textsc{Proof of Theorem \ref{th:6}.} Let case~i) be true, then we can write
\begin{multline}
\cJ_{\rm L}(T)\geq\sup_{t\geq0}\esup\Exp_t^1[T-t|\ccF_t,T>t]\\
\geq\sup_{t\geq0}\esup\Exp_t^1[\TS(0)-t|\ccF_t,\TS(0)>t]=\Exp_0^1[\TS(0)]=\frac{1}{\Pro_0^1(\ell_1\geq\nu(0))},
\label{eq:F.1}
\end{multline}
where the first inequality is obvious and the second comes from the fact that if $\nu(0)\leq1$ then the Shewhart stopping time $\TS(0)$, according to Section\,\ref{ssec:1.1}, optimizes Lorden's original criterion for the post-change probability measure $\Pro_0^1$. The second last equality comes from the fact that Shewhart, exactly as CUSUM, is an equalizer and the last equality is true due to \eqref{eq:Average}. We also have
$$
\sup_{t\geq0}\esup\Exp_t^2[\TS(0)-t|\ccF_t,\TS(0)>t]=\Exp_0^2[\TS(0)]=\frac{1}{\Pro_0^2\big(\ell_1\geq\nu(0)\big)},
$$
because $\TS(0)$ is an equalizer under $\Pro_0^2$ as well. Since by assumption, $\Pro_0^1(\ell_1\geq\nu(0))\leq\Pro_0^2(\ell_1\geq\nu(0))$ this suggests that $\cJ_{\rm L}(\TS(0))=\max_{i=1,2}1/\Pro_0^i(\ell_1\geq\nu(0))=1/\Pro_0^1(\ell_1\geq\nu(0))$. Using this last observation in \eqref{eq:F.1} we conclude that $\cJ_{\rm L}(T)\geq\cJ_{\rm L}(\TS(0))$, thus proving optimality of $\TS(0)$. In a similar way we can prove optimality for $\TS(1)$ under the condition of case~ii).

Assume now that we are in case~iii) then
\begin{multline*}
\cJ_{\rm L}(T)\geq\Exp_t^i[T-t|\ccF_t,T>t]=\Exp_t^i\left[\sum_{n=t}^\infty\ind{T>n}|\ccF_t,T>t\right]\\
=\sum_{n=t}^\infty\Exp_t^i[\ind{T>n}|\ccF_t,T>t]=\sum_{n=t}^\infty\Exp_\infty\left[\ind{T>n}\prod_{m=t+1}^n\ell_m^i|\ccF_t,T>t\right]\\
\Exp_\infty\left[\sum_{n=t}^{T-1}\prod_{m=t+1}^n\ell_m^i|\ccF_t,T>t\right],
\end{multline*}
where we applied a change of measures and used the fact that $\{T>n\}$ is $\ccF_n$-measurable. We also define $\prod_{a}^b=1$ and $\sum_{a}^b=0$ when $b<a$ while we recall that $\ell_0^i$ is defined to be 0. Multiplying both sides of the previous inequality with $\ind{T>t}(1-\ell_t^i)^+$ which is nonnegative and $\ccF_t$-measurable and taking expectation with respect to the nominal measure, we obtain
\begin{multline*}
\cJ_{\rm L}(T)\Exp_\infty[\ind{T>t}(1-\ell_t^i)^+]\geq
\Exp_\infty\left[\sum_{n=t}^{T-1}\ind{T>t}\prod_{m=t+1}^n\ell_m^i(1-\ell_t^i)^+\right]\\
\geq\Exp_\infty\left[\sum_{n=t}^{T-1}\ind{T>t}\prod_{m=t+1}^n\ell_m^i(1-\ell_t^i)\right]\\
=\Exp_\infty\left[\sum_{n=t}^{T-1}\ind{T>t}\left(\prod_{m=t+1}^n\ell_m^i-\prod_{m=t}^n\ell_m^i\right)\right].
\end{multline*}
Summing over all $t\geq0$ and recalling that $\ell_0^i=0$, $\prod_{n+1}^n=1$, yields
\begin{multline*}
\cJ_{\rm L}(T)\Exp_\infty\left[\sum_{t=0}^{T-1}(1-\ell_t^i)^+\right]
\geq\Exp_\infty\left[\sum_{t=0}^{T-1}\sum_{n=t}^{T-1}\left(\prod_{m=t+1}^n\ell_m^i-\prod_{m=t}^n\ell_m^i\right)\right]\\
=\Exp_\infty\left[\sum_{n=0}^{T-1}\sum_{t=0}^{n}\left(\prod_{m=t+1}^n\ell_m^i-\prod_{m=t}^n\ell_m^i\right)\right]=\Exp_\infty\left[\sum_{n=0}^{T-1}1\right]=\Exp_\infty[T].
\end{multline*}
Finally multiplying the previous inequality for $i=1$ with $(1-q)$ and the one for $i=2$ with $q$ and adding the resulting expressions we obtain the following lower bound
$$
\cJ_{\rm L}(T)\geq\frac{\Exp_\infty[T]}{\Exp_\infty\left[\sum_{t=0}^{T-1}(1-q)(1-\ell_t^1)^+ +q(1-\ell_t^2)^+\right]}.
$$

Following the usual methodology we have adopted in the previous proofs, in order to minimize the lower bound, with the help of the randomization probability $\varpi$ we can show that we can limit ourselves to stopping times that satisfy the false alarm constraint with equality. Consequently
\begin{multline}
\inf_{T:\Exp_\infty[T]\geq\gamma}\cJ_{\rm L}(T)\\
\geq\inf_{T:\Exp_\infty[T]=\gamma}\frac{\Exp_\infty[T]}{\Exp_\infty\left[\sum_{t=0}^{T-1}(1-q)(1-\ell_t^1)^+ +q(1-\ell_t^2)^+\right]}\\
=\frac{\gamma}{\displaystyle\sup_{T:\Exp_\infty[T]=\gamma}\Exp_\infty\left[\sum_{t=0}^{T-1}(1-q)(1-\ell_t^1)^+ +q(1-\ell_t^2)^+\right]}.
\label{eq:F.2}
\end{multline}
For simplicity denote $z_t=(1-q)(1-\ell_t^1)^+ +q(1-\ell_t^2)^+$, then maximizing the denominator subject to the equality constraint is straightforward. Using a Lagrange multiplier with value 
$$
\lambda=(1-\nu)\Pro_\infty(z_1<1-\nu)+\Exp_\infty[z_1\ind{z_1\geq1-\nu}]
$$
and applying standard optimal stopping theory, we can conclude that the optimum stopping time is
$$
\mathcal{T}=\inf\{t>0:z_t\leq1-\nu\}.
$$

For $\nu=\nu(q)$ we will show that $\mathcal{T}$ is in fact equivalent to $\TS(q)$ under condition \eqref{eq:th6.2}. Indeed notice that when $\mathcal{T}$ stops we have
$$
1-\nu(q)\geq z_{\mathcal{T}}=(1-q)(1-\ell_\mathcal{T}^1)^+ +q(1-\ell_\mathcal{T}^2)^+\geq(1-q)(1-\ell_\mathcal{T}^1) +q(1-\ell_\mathcal{T}^2)
$$
which implies
$$
(1-q)\ell_\mathcal{T}^1 +q\ell_\mathcal{T}^2\geq\nu(q),
$$
suggesting $\TS(q)\leq\mathcal{T}$ (because $\TS(q)$ is the \textit{first} time instant the above inequality is true). For any $t<\mathcal{T}$ we have
\begin{equation}
1-\nu(q)<z_t=(1-q)(1-\ell_t^1)^+ +q(1-\ell_t^2)^+.
\label{eq:app100}
\end{equation}
Because $\nu(q)$ satisfies \eqref{eq:th6.2} we will show that the previous inequality can be true only when $\xi_t\in\cA_1\cap\cA_2$, that is, when the likelihood ratios $\ell_t^1$ and $\ell_t^2$ are simultaneously no larger than 1. Indeed from \eqref{eq:th6.2} we have that the upper bound of $\nu(q)$ is no larger than 1, consequently in \eqref{eq:app100} the two likelihood ratios cannot be larger than 1 simultaneously. Let $\ell_t^1\leq1$ and $\ell_t^2>1$ then \eqref{eq:app100} becomes $1-\nu(q)<(1-q)(1-\ell_t^1)$ or $q+(1-q)\ell_t^1<\nu(q)$. But the latter is again not possible because of the left hand side inequality of \eqref{eq:th6.2}. The same is true when $\ell_t^2\leq1$ and $\ell_t^1>1$. Hence \eqref{eq:app100} can be valid only when both likelihood ratios are smaller than 1. This means that when $t<\mathcal{T}$, \eqref{eq:app100} is equivalent to
$$
(1-q)\ell_t^1 +q\ell_t^2<\nu(q).
$$
This observation suggests that $t<\mathcal{T}$ combined with \eqref{eq:th6.2} implies $t<\TS(q)$, therefore $\mathcal{T}-1<\TS(q)$ or $\TS(q)\geq\mathcal{T}$. Consequently $\TS(q)=\mathcal{T}$, which means that $\TS(q)$ optimizes the lower bound in \eqref{eq:F.2}. 

To compute the optimum value of the lower bound, since before stopping both likelihood ratios are no larger than 1, we note
\begin{multline*}
\hskip-0.3cm\Exp_\infty\left[\sum_{t=0}^{\TS(q)-1}(1-q)(1-\ell_t^1)^+ +q(1-\ell_t^2)^+\right]
=\gamma-\Exp_\infty\left[\sum_{t=0}^{\TS(q)-1}(1-q)\ell_t^1 +q\ell_t^2\right]\\
=\gamma\big\{(1-q)\Pro_0^1\big((1-q)\ell_t^1 +q\ell_t^2\geq\nu(q)\big)+q\Pro_0^2\big((1-q)\ell_t^1 +q\ell_t^2\geq\nu(q)\big)\big\}\\
=\gamma\Pro_0^1\big((1-q)\ell_t^1 +q\ell_t^2\geq\nu(q)\big)=\gamma\Pro_0^2\big((1-q)\ell_t^1 +q\ell_t^2\geq\nu(q)\big),
\end{multline*}
where we used the fact that $\gamma=\Exp_\infty[\TS(q)]=1/\Pro_\infty((1-q)\ell_t^1 +q\ell_t^2\geq\nu(q))$
and that we are in case~iii) with condition \eqref{eq:th5.1} being valid.
Consequently
$$
\inf_{T:\Exp_\infty[T]\geq\gamma}\cJ_{\rm L}(T)
\geq\frac{1}{\Pro_0^i\big((1-q)\ell_t^1 +q\ell_t^2\geq\nu(q)\big)}.
$$
Now it is straightforward to verify that the lower bound is attainable by the Lorden measure of the Shewhart stopping time $\TS(q)$. This is clearly due to the fact that $\TS(q)$ is an equalizer across time and across post-change measures. This concludes our proof.\qed

\end{document}